\title{A Game Theoretical Semantics for Logics of Nonsense}
\author{Can Ba\c{s}kent
\institute{Department of Computer Science, \\ Middlesex University, London, UK}
\email{c.baskent@mdx.ac.uk}
}
\newtheorem{thm}{Theorem} [section]     
\newtheorem{cor}[thm]{Corollary} 
\newtheorem{prop}[thm]{Proposition} 
\newtheorem{rem}[thm]{Remark}
\theoremstyle{definition}
\newtheorem{dfn}[thm]{Definition}
\newtheorem{ex}[thm]{Example}
\newtheorem{obs}[thm]{Observation}
\begin{document}
\maketitle

\begin{abstract}
Logics of non-sense allow a third truth value to express propositions that are \emph{nonsense}. These logics are ideal formalisms to understand how errors are handled in programs and how they propagate throughout the programs once they appear. In this paper, we give a Hintikkan game semantics for logics of non-sense and prove its correctness. We also discuss how a known solution method in game theory, the iterated elimination of strictly dominated strategies, relates to semantic games for logics of nonsense. Finally, we extend the logics of nonsense only by means of semantic games, developing a new logic of nonsense, and propose a new game semantics for Priest's Logic of Paradox.

~

\noindent \textbf{Keywords} Game semantics, logics of nonsense, logic of paradox, iterated elimination of strictly dominated strategies.
\end{abstract}

\section{Introduction}

Logics of nonsense allow a third truth value to express propositions that are \emph{nonsense}. The initial motivation behind introducing nonsensical propositions was to capture the logical behavior of semantic paradoxes that were thought to be nonsensical.

As argued by Ferguson, nonsense is ``infectious" \cite{fer0}: ``Meaninglessness [...] propagates through the language; that a subformula is meaningless entails that any complex formula in which it finds itself is likewise meaningless." Logics of nonsense, for this reason, are intriguing.

Nonsense appears in various contexts. In mathematics, the expression ``$1/x$'' is considered meaningless when $x=0$. Certain approaches to truth disregard paradoxes of self-reference and exclude them as \emph{meaningless}. In philosophy of science, some theories are thought to be ``monsters" and are thereby excluded from research programs \cite{lak}. In computer science, software bugs often remain throughout the program run and produce an unexpected output. In all these discussions, nonsense propagates throughout the system and ``infects'' the other truth values.

In what follows we will employ game theoretical tools and techniques to give a broader reading of the aforementioned \emph{infectiousness}. What motivates our approach is manifold. First, game theoretical semantic tools offer a very intuitive and natural approach to semantics.  Moreover, they suggest \emph{computational connections} between truth, proofs, programs and strategies, relating major concepts of game theory, computer science and logic to each other constructively. In this way, certain connections between how errors are handled in programs and how they propagate throughout the programs once they appear can be established. Studying logics of nonsense helps developing a substantial analysis of such situations. Second, game semantics is perhaps the most studied \emph{non-compositional} semantics. It is non-compositional in the way that the truth of a complex formula is evaluated based on the truth values of \emph{some} of its components. We can thus answer how ``some'' truth values can infect the others and computationally determine the truth value of the formula in question. Such ``infections" suggest that there is a certain \emph{strategic} interaction amongst the players. Third, a study of ``infectiousness'' helps us draw a broader picture of \emph{interactive} and rational behaviour, which is a central theme in multi-agent systems, social choice and decision theories. In conclusion, a game theoretical semantic study of logics of nonsense helps us understand ``infectiousness'' from a variety of different but complementary perspectives.

In game semantics, the truth of a formula in a model is evaluated by playing a game. Given a model and a well-formed formula, players try to win a game in order to decide the truth value of the formula in the given model. In classical propositional logic, \emph{the semantic verification game} (or ``semantic game'' for short) is played by two players, the \emph{verifier} and the \emph{falsifier}, who are often called Heloise and Abelard respectively. The verifier aims at verifying the truth of a given formula in a given model whereas the falsifier aims at falsifying it. The game is for two-players, reflecting the binary truth values of the logic.

In a semantic game, the given formula is broken into subformulas step by step. The game terminates when it reaches the propositional atoms. If the game ends with a true atom, then the verifier wins the game. Otherwise, it is a win for the falsifier. The moves and turns of the game are determined syntactically based on the shape of the formula. If the main connective is a conjunction, the falsifier makes a move. If it is disjunction, the verifier makes a move. If the main connective is a negation, the players switch roles: the verifier becomes the falsifier, the falsifier becomes the verifier. Throughout the game, players may switch their roles according to the rules as the rules are given for players' roles not for the individual players.

We say that a player has a \emph{winning strategy} if he has a set of rules that guides him throughout the play and tells him which move to make, and consequently gives him a win regardless of how the opponent plays. In \emph{classical} game semantics, winning strategies necessarily determine the truth values of the formulas. In non-classical game semantics, this assumption is rejected. Because some games may have multiple winners -- with multiple ``winning'' strategies. In that case, we need to be able to identify the winning strategy that necessarily determines the truth value of the formula in question. We call such winning strategies \emph{dominant}. We will use dominant strategies to capture the infectiousness of nonsense.

Logics of nonsense and game semantics both have a long, but relatively unconnected, history.

Game semantics has been a popular research area since Jaakko Hintikka, and Helsinki School researchers produced a significant amount of work on the subject. Broad historical surveys of the subject were presented in \cite{piet,hin8} with many references. Game semantics have been used to understand proofs and computation in computer science \cite{abr2}, and dialogical games in philosophical logic \cite{lor1,rah0}.

The connection between non-classical logic and game semantics have \emph{not} been studied so extensively. Game semantics for various many-valued logics have been presented in \cite{ferm,ferm0}. In the context of dialogical logic, which can be seen as a variant of game semantics, the connection between semantic games and paraconsistency was explored in \cite{rah}. A recent work focused on the connection between various well-known paraconsistent logics and game semantics and offered various modifications for the games based on different logics \cite{bas25,bas30}. The current work attempts at following the same path yet puts more emphasis on strategies.

Logics of nonsense were first suggested by Bochvar in the 1930s and by Halld\'{e}n in the 1940s \cite{boc,hald}\footnote{The translation of \cite{boc} appeared as \cite{boc0}.}. These logics enjoy different validities as Halld\'{e}n's system preserves nonsense (along with true) under valid inferences. Bochvar's system, however, does not. As such, Halld\'{e}n's logic is paraconsistent; Bochvar's, dually, is paracomplete. Such systems were extended by {\AA}qvist and Segerberg \cite{aqv,seg1}. Ha\l{}kowska suggested an algebra for logics of nonsense \cite{halk}. In that work, nonsense truth value was motivated by certain meaningless mathematical expressions such as  $1/x$ where $x = 0$.

There has been an increasing interest towards infectious logics. Ferguson examined the logics of nonsense in relation to various other non-classical containment logics, offering a broad perspective \cite{fer0}. Szmuc discussed logics of nonsense within the context of some other non-classical logics, including logics of formal inconsistency and of formal underminedness. As such, he carried the issue into broader classes of logics and generated a variety of logics of nonsense \cite{szm}. Following this methodology, Ciuni et al constructed a linear order of infectious logics, extending logics of non-sense into a countably infinite family of subsystems of classical logic \cite{ciu}. What they achieved in their work uses logical and semantical methods, while we arrive at similar systems using game theoretical tools in this work. Omori presented an extension of logics of nonsense by discussing them within the framework of logics of formal inconsistency \cite{omo1}. Omori and Szmuc focused on the behavior of conjunction and disjunction in logics of nonsense and observe the unexpected working of disjunction \cite{omo0}.

The current paper is organized as follows. First, we formally define semantic games. Then, after a brief discussion of a logic of nonsense, we introduce a game semantics for it. Consequently, we prove the correctness theorem of our game semantics. We then extend our semantic games for an attempt to obtain some other logics of nonsense that extend the initial system. Finally, as a case study, we take advantage of our approach in order to obtain a more refined semantic game for a well-known paraconsistent logic, Priest's Logic of Paradox.

The main contribution of the paper is to emphasize the role of dominant strategies in game semantics in the context of non-classical logics. By doing so, we establish a stronger connection between non-classical logics and game theory. We achieve this by showing how dominant strategies and certain truth values in game semantics work similarly.

This paper is part of a research programme on \emph{non-classical game theory}. Similar to non-classical approaches to truth (and proof), it is possible to analyze strategies and wins from a non-classical perspective. This will provide a more nuanced understanding of game theory and open up a new avenue for applications for non-classical logic(s). 

 Throughout this work, we will use ``game theoretical semantics'' and ``game semantics'' interchangeably, assuming no confusion arises.

\section{Semantic Games}

Non-classical logics and their game semantics allow us to study the connection between non-classical logical elements and their corresponding game theoretical properties. How does a paraconsistent game look like? Can we have a multi-player semantic game for a multi-valued logic? Whilst answering these questions, one can \emph{engineer non-classical logics} using game theoretical tools as well as \emph{develop semantic games} for non-classical logics. Those games can be multi-player, non-zero-sum, cooperative, non-determined and non-sequential, differing from the semantic games for classical logic. 

In what follows, we focus on systems of propositional logic which allow us to reason about strategies in semantic games. This will exemplify the fact that strategies in classical semantic games are over-simplified (or \emph{degenerate}) forms of strategies that appear more clearly in semantic games for some non-classical logics, such as logics of nonsense. Before presenting our results, we need to set up our system.

We define semantic games following the terminology given in \cite{piet,bas25}. First, we consider the language $\mathcal{L}$ of propositional logic given as follows in the Backus--Naur form for a set of countable propositional variables $\mathbf{P}$:
$$\varphi:=~ p ~\mid~ \neg \varphi ~\mid~ \varphi \wedge \varphi ~\mid~ \varphi \vee \varphi$$
where $p \in \mathbf{P}$. We define the conditional arrow as expected: $\varphi \rightarrow \psi \equiv \neg \varphi \vee \psi$.

A model for semantic games is defined as follows.

\begin{dfn} A model $M$ is a tuple $(S, v)$ where $S$ is a non-empty domain on which the game is played, and the valuation function $v$ assigns the formulas in $\mathcal{L}$ to truth values in the logic.	
\end{dfn}

In semantic games, we have a set of players, game rules and a set of positions. Some non-classical logics, such as Priest's LP and Strong Kleene Logic, share the same truth tables but they differ on their set of designated truth values -- rendering the former paraconsistent, the latter paracomplete. In order to distinguish such systems, designated truth values\footnote{The set of designated truth values are used to define theorems in a particular logic. They can be viewed as non-classical extensions of the truth value \textsc{True}, and are preserved under valid inferences. The relation between the proof theory and semantic games is well-pronounced for certain logics, which falls outside the scope of this paper \cite{abr2}. Similarly, game theoretical readings of logical consequence relation remains a curious problem. We are thankful to the anonymous referee for pointing this out.} are specified -- even though we will not focus on them in this work. Some games may not be sequential, thus a \emph{game-token} may be used to indicate the current position of the players.

\begin{dfn} A semantic game is a tuple $\Gamma = (\pi, \rho, \sigma, \tau, \delta)$ where $\pi$ is the set of players, aiming at winning the game by reaching atomic formulas with specific truth values based on their roles, $\rho$ is the set of well-defined game rules, $\sigma$ is the set of positions, $\tau$ is the set of positions of the game-token in the case of a concurrent play, and $\delta$ is the set of designated truth values. 
\end{dfn}

Let us start by explaining how $\Gamma$ is constructed. Players in the semantic game $\Gamma$ have interchangeable roles. A player $p_i \in \pi$ may assume different roles, thus strategies, in the game. Therefore, the game rules will be given not for the players $p_i$ but for their \emph{roles}. Players, then, will follow the rules based on what roles they assume. The \emph{falsifier} aims at reaching a false atom, the \emph{verifier} a true atom.

The positions $\sigma$ in the game are determined by the subformulas of the given formula and the players. As such, they also include the turn function, which specifies which players are allowed to make a move at each position. Consequently, the set of positions will be composed of tuples as $(p_i, \varphi)$ for $p_i \in \pi$ and a well defined formula $\varphi$. The tuple $(p_i, \varphi)$ will read ``it is player $p_i$'s turn at $\varphi$". The set $\sigma_{p_i}$ will denote the set of positions for player $p_i \in \pi$, and will be defined as $ \sigma_{p_i} = \{(p_i, \varphi) : (p_i, \varphi) \in \sigma \text { for a fixed player } p_i\}$. For simplicity, we will include the empty set in $\sigma$.

The set $\sigma$ is not sufficient by itself to describe which positions are concurrent and played simultaneously. In semantic games for classical logic, $\tau$ is the set of singletons as there is no concurrent or parallel play. In games, where two players are allowed to make moves at the same time, we will have $\{(p_1, \varphi), (p_2, \psi)\} \in \tau$, which reads ``player $p_1$ plays at $\varphi$ and player $p_2$ plays at $\psi$ simultaneously".

Designated truth values $\delta$ determine the theorems in a logic. In order to distinguish different logics, we specify $\delta$. Yet, we will not focus on proof theory in this work.

The rules $\rho$ of semantic games are defined inductively as (partial) transformations from a game position $(p_i, \varphi)$ to a set of game positions $\{(p_j, \psi)\}_{j \in I}$ for $p_i, p_j \in \pi$, $I \subseteq \pi$, well-defined formula $\varphi$ and a subformula $\psi$ of $\varphi$, defined in the standard way. As such, game rules do not change if a position is a concurrent or not since they are defined per position. For simplicity reasons, we do not include a turn function to determine which players are supposed to make a move at any position. Game rules and positions specify the turn. If player $p_i$ is not supposed to make a move at position $(p_i, \varphi)$ then, the game rule $\rho_j$ will return $\rho_j((p_i, \varphi)) = \emptyset$ where $\emptyset \in \sigma$. 

A \emph{run} (or \emph{play}) is a sequence of sets from $\tau$ which starts with $\{ \{ p_i, \varphi \} \}$ and ends with a set that contains a position with an atom,  where $\varphi$ is the initially given formula. In semantic games for classical propositional logic, for example, a run is a sequence of singletons from $\tau$. 

The semantic game $\Gamma$ for a model $M$ and a well-defined formula $\varphi \in \mathcal{L}$ is denoted by $\Gamma(M, \varphi)$.

In semantic games, logical behavior is described using game theoretical tools and concepts. Strategies are perhaps the most important of such tools. In semantic games for classical logic, having winning strategies \emph{necessarily} determines the truth value of the formula in question, and \emph{vice versa}. This is a very intuitive and fundamental idea. Yet, it fails to translate into various non-classical logics as having a winning strategy for a semantic game does not necessarily determine the truth value of the formula in question. For that reason, we need a stronger concept.

\begin{dfn}{\label{dfn-dominant-strategy}}
In $\Gamma(M, \varphi)$, a \emph{dominant winning strategy} is a winning strategy that determines the truth value of $\varphi$ once played.
\end{dfn}

In other words, in $\Gamma(M, \varphi)$ if a player $p_i$ admits a dominant winning strategy $\Sigma$ and if $p_i$ plays $\Sigma$, then $\varphi$ has the truth value that $p_i$ forces. If $p_i$ admits a winning strategy, on the other hand, then $\varphi$ \emph{may} have the truth value that $p_i$ forces, depending on other players and  conditions. However, if $p_i$ does not admit a winning strategy nor a dominant winning strategy, she cannot determine the truth value of $\varphi$. Naturally, in classical logic, every winning strategy is dominant. In non-classical logics, they are not.

Notice that what Definition~\ref{dfn-dominant-strategy} describes is rather different than what most readers may be familiar in classical game theory about strategy dominance.\footnote{We are thankful to the anonymous referee for pointing this out.} The standard game theoretical definitions discuss pay-offs and higher pay-offs, whereas we focus on determining truth-values \cite{osb}. There is, however, a philosophical connection. Higher-payoffs with dominant strategies help a player win a game in competitive games, dominant winning strategies help a player to determine the truth value of a formula in semantic games.

\section{Game Semantics for the Bochvar--Halld\'{e}n Logic}

Bochvar--Halld\'{e}n Logic introduces an additional truth value $N$, called \emph{nonsense}, which intuitively stands for sentences which are nonsensical or meaningless. As we argued earlier, Bochvar--Halld\'{e}n logics are actually two distinct logics with the same truth table. In Bochvar's system, the designated truth value is $T$ whereas in Halld\'{e}n's it is $\{ T, N\}$. For our semantic considerations, we treat them together and call this formalism the Bochvar--Halld\'{e}n Logic (BH3, for short) with the following truth table.

\begin{figure}[h!]
\centering
\begin{tabular}{l|l}
& $\neg$ \\
\hline
$T$ & $F$ \\
$N$ & $N$ \\
$F$ & $T$ 
\end{tabular} 
\qquad
\begin{tabular}{c|c|c|c}
$\wedge$ & $T$ & $N$ & $F$ \\
\hline
$T$ & $T$ & $N$ & $F$ \\
$N$ & $N$ & $N$ & $N$ \\
$F$ & $F$ & $N$ & $F$
\end{tabular}
\qquad
\begin{tabular}{c|c|c|c}
$\vee$ & $T$ & $N$ & $F$ \\
\hline
$T$ & $T$ & $N$ & $T$ \\
$N$ & $N$ & $N$ & $N$ \\
$F$ & $T$ & $N$ & $F$
\end{tabular}
\caption{\emph{The truth tables underlying} BH3.}
{\label{truthtable}}
\end{figure}

In a semantic game for BH3, we need three players to force the three truth values. In addition to the classical players Verifier and Falsifier, we introduce a third player which we call ``Dominator''. Dominator forces the game to a nonsense proposition. As such, he is allowed to make moves along other players. We also stipulate that Dominator's strategy is \emph{dominant} -- his wins determine the truth value.

We denote the semantic game for BH3 by GTS\textsuperscript{BH3} and propose the following rules for GTS\textsuperscript{BH3}.

\begin{dfn}{\label{dfn-BH3game}} The tuple $\Gamma_{\mathrm{BH3}} = (\pi, \rho, \sigma,\tau,\delta)$ is a semantic game for BH3 where $\pi=\{\text{Falsifier, Verifier},$ $\text{Dominator}\}$, $\sigma$ is the set of tuples $(p_i, \varphi)$ for $p_i \in \pi$ and a well-formed formula $\varphi$, and $\delta$ is $\{ T, N \}$ for Bochvar logic and $\{ T \}$ for Halld\'{e}n logic. For a game $\Gamma_{\mathrm{BH3}}(M, \varphi)$, the set of positions $\sigma$ is specified in the following.
\begin{itemize}
	\item If $\varphi$ is atomic, then $(p_i, \varphi) \in \sigma$ for all $p_i \in \pi$,
	\item If $\varphi = \neg \psi$, then $(p_i, \varphi) \in \sigma$ for all $p_i \in \pi$, and $(p_j, \psi) \in \sigma$ for some $p_j \in \pi$ depending on $\psi$'s main connective,
	\item If $\varphi = \chi \wedge \psi$, $(\mathrm{Falsifier}, \varphi) \in \sigma$, $(\mathrm{Dominator}, \varphi) \in \sigma$, and $(p_j, \chi), (p_k, \psi) \in \sigma$ for some $p_j, p_k \in \pi$ depending on $\psi$ and $\chi$'s main connectives,
	\item If $\varphi = \chi \vee \psi$, $(\mathrm{Verifier}, \varphi) \in \sigma$, $(\mathrm{Dominator}, \varphi) \in \sigma$, and $(p_j, \chi), (p_k, \psi) \in \sigma$ for some $p_j, p_k \in \pi$ depending on $\psi$ and $\chi$'s main connectives.
\end{itemize}

The set $\tau$ is given inductively as follows only for those positions $(p_i, \varphi)$ in $\sigma$ that are allowed to be played simultaneously but independently.
\begin{itemize}
\item For $\varphi = \chi \wedge \psi$, $\{(\mathrm{Falsifier}, \chi \wedge \psi), (\mathrm{Dominator}, \chi \wedge \psi)\} \in \tau$,
\item For $\varphi = \chi \vee \psi$, $\{(\mathrm{Verifier}, \chi \vee \psi), (\mathrm{Dominator}, \chi \vee \psi)\} \in \tau$.
\end{itemize}

Finally, $\rho$ is given inductively as follows for players' roles.
\begin{itemize}
\item[($\rho_p$)] If $\varphi$ is atomic, the game terminates, and Verifier wins if $\varphi$ is true, Falsifier wins if $\varphi$ is false and Dominator wins if $\varphi$ is nonsense;
\item[($\rho_\neg$)] If $\varphi = \neg \psi$, Falsifier and Verifier switch roles, Dominator keeps his role, and the game continues as $\Gamma_{\mathrm{BH3}}(M, \psi)$;
\item[($\rho_\wedge$)] If $\varphi = \chi \wedge \psi$, Falsifier and Dominator choose between $\chi$ and $\psi$ independently and simultaneously;
\item[($\rho_\vee$)] If $\varphi = \chi \vee \psi$, Verifier and Dominator choose between $\chi$ and $\psi$ independently and simultaneously;
\item[($\rho_s$)] Dominator's strategy strictly dominates the Verifier's and Falsifier's. 
\end{itemize}
\end{dfn}

Let us now briefly explain the game. First, the set of positions $\sigma$ is constructed inductively. Atomic formulas are associated with all players as they all need to check whether they win. At negated formulas, players reshuffle their roles and the game continues with the players based on the next formula's syntactic form, which is determined inductively. For conjunctions and disjunctions, the next position is determined based on the main connective of conjuncts and disjuncts, respectively. Two players are allowed to make moves at the same time but independently at conjunctions and disjunction, as specified by $\tau$: conjunctions are for Falsifier and Dominator, disjunctions are for Verifier and Dominator. The set of available moves for those players are specified further in the game rules $\rho$. 

We stipulate that Dominator's strategy dominates the others and his role does not change throughout the game, even under negation, which reflects the truth table for BH3 given in Figure~\ref{truthtable}. The strategies of Verifier and Falsifier, however, do not dominate each other or any other strategy, by default.

The following example illustrates how BH3 semantic games are structured and played.

\begin{ex}{\label{example-one}}
Consider the formula $(p \vee q) \vee (r \wedge q)$ where $p$ is true, $q$ is nonsense and $r$ is false. This formula has the truth value $N$ in BH3. The following diagram depicts the game tree informally.

\begin{quote}
	\Tree [.{$(p \vee q) \vee (r \wedge q)$} [.{$p \vee q$} [.$p$ \small{\textsc{true}} ] [.$q$ \small{\textsc{nonsense}} ] ] !\qsetw{3cm} [.{$r \wedge q$} [.$r$ \small{\textsc{false}} ] [.$q$ \small{\textsc{nonsense}} ] ] ]

\end{quote}

In this game, some of the positions in $\sigma$ are $(\mathrm{Verifier}, (p \vee q) \vee (r \wedge q))$, $(\mathrm{Dominator}, (p \vee q) \vee (r \wedge q))$, $(\mathrm{Verifier}, p \vee q)$, $(\mathrm{Dominator}, p \vee q)$, $(\mathrm{Falsifier}, r \wedge q)$, $(\mathrm{Dominator}, r \wedge q)$ (and those for the atoms). The token set $\tau$ include sets $\{ (\mathrm{Verifier}, (p \vee q) \vee (r \wedge q)), (\mathrm{Dominator}, (p \vee q) \vee (r \wedge q)) \}$ and $\{ (\mathrm{Verifier}, p \vee q), (\mathrm{Dominator}, p \vee q) \}$, for example.

Now, at the beginning of the game, Verifier and Dominator make choices. If Verifier chooses $p \vee q$, then he gets to make the next move and chooses $p$ to win the game. So, he has a \emph{winning} strategy. However, Verifier's winning strategy is dominated. At the beginning, Dominator also gets to make a move. Suppose, he chooses $p \vee q$ as well. Now, he can make a move again and chooses $q$ which is nonsense. This constitutes his winning strategy. But, by game rules, his strategy dominates the others. So, he has a dominant winning strategy, which determines the truth value of the formula $(p \vee q) \vee (r \wedge q)$. The reasoning for Dominator's choice of $r \wedge q$ is similar.

The play/run for this specific example is given as follows:

$\lbrace ~ \{ (\mathrm{Verifier}, (p \vee q) \vee (r \wedge q)), (\mathrm{Dominator}, (p \vee q) \vee (r \wedge q)) \}$,

$\{ (\mathrm{Verifier}, p \vee q), (\mathrm{Dominator}, p \vee q) \}$,

$\{ (\mathrm{Verifier}, p ), (\mathrm{Dominator}, q) \} ~ \rbrace$.
\hfill $\blacktriangle$
\end{ex}

Let us now start with noting that Dominator is the only dominant player.

\begin{thm}{\label{theo-dom}}
In a GTS\textsuperscript{BH3} semantic game $\Gamma_{\mathrm{BH3}}(M, \varphi)$, Verifier and Falsifier can never have winning strategies at the same time.
\end{thm}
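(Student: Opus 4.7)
I would prove the statement by structural induction on the formula $\varphi$. The key observation is that, although $\Gamma_{\mathrm{BH3}}$ is a three-player game with simultaneous moves, Dominator's moves at a conjunction or disjunction are made \emph{independently} of Falsifier's and Verifier's, so they do not affect where the Verifier--Falsifier interaction ends up. Thus the question of whether Verifier and Falsifier both admit winning strategies reduces to the ``projection'' of the game onto these two players, which is essentially the classical semantic game, except that an atomic position labelled by a nonsense atom is a loss for both of them.

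\textbf{Base case.} If $\varphi$ is atomic, the rule $\rho_p$ makes exactly one of Verifier, Falsifier, or Dominator the winner (depending on whether $\varphi$ is true, false, or nonsense in $M$). In particular, Verifier and Falsifier can never both win an atomic play, so they cannot both have winning strategies.

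\textbf{Inductive step.} I would handle the three connectives:
\begin{itemize}
\item For $\varphi = \neg \psi$, the rule $\rho_\neg$ swaps the roles of Verifier and Falsifier while leaving Dominator untouched. Hence a winning strategy for Verifier on $\neg\psi$ corresponds exactly to a winning strategy for Falsifier on $\psi$, and vice versa. The induction hypothesis applied to $\psi$ then immediately rules out the possibility that both have winning strategies on $\neg\psi$.
\item For $\varphi = \chi \wedge \psi$, Falsifier has a move (the choice of a conjunct), while Verifier has none. So Verifier has a winning strategy on $\chi\wedge\psi$ iff he has a winning strategy on \emph{both} $\chi$ and $\psi$, and Falsifier has a winning strategy on $\chi\wedge\psi$ iff he has one on \emph{at least one} of $\chi$ or $\psi$. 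If both held simultaneously, there would be a subformula (say $\chi$) on which both Verifier and Falsifier admit winning strategies, contradicting the induction hypothesis.
\item For $\varphi = \chi \vee \psi$, the argument is dual, exchanging the roles of Verifier and Falsifier.
\end{itemize}

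\textbf{Main obstacle.} The only subtle point is justifying that Dominator's simultaneous moves can be ignored for this argument. I would briefly remark that, by the rule $\rho_s$ and by the definition of $\tau$, Dominator's choice at a conjunction or disjunction does not constrain where the Verifier--Falsifier token goes; thus, when restricting attention to the Verifier--Falsifier part of the play, the game is exactly a classical semantic game over the same formula, and the induction above applies without modification. Once that reduction is in place, the proof is routine.
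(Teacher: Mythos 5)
Your proposal is correct and follows essentially the same route as the paper: structural induction on $\varphi$, with the observation that at a conjunction Verifier must win both conjuncts while Falsifier need only win one (dually for disjunction), so simultaneous winning strategies would contradict the induction hypothesis on a subformula. Your explicit remark that Dominator's simultaneous moves can be projected away is a slightly more careful justification than the paper offers, but it does not change the argument.
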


\begin{proof}
The proof is by induction on $\varphi$.

For the propositional case, by definition, either of the players will have a winning strategy. Similarly, for the negation, only one player gets to make a move in each case.

The interesting cases are the binary boolean connectives. If $\varphi = \psi \wedge \chi$, then, by the game rules, Falsifier and Dominator gets to make moves. If they have winning strategies at that stage for some of the conjuncts, they will still have winning strategies at $\psi \wedge \chi$. On the other hand, Verifier can have a winning strategy at a conjunction as well. The only condition for this case is that Verifier must have a winning strategy for both of the conjuncts $\psi$ or $\chi$. In this case, neither Falsifier nor Dominator has a winning strategy. At the end, we have two cases: (i) Falsifier and/or Dominator has a winning strategy, or (ii) Verifier has a winning strategy. Either case, Verifier and Falsifier cannot have winning strategies at the same time.

The case for the disjunction is similar.
\end{proof}

\begin{cor}{\label{cor-det}}
BH3 semantic games GTS\textsuperscript{BH3} are determined. 	
\end{cor}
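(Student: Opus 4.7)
The plan is to proceed by structural induction on $\varphi$, showing that in every game $\Gamma_{\mathrm{BH3}}(M, \varphi)$ at least one player admits a winning strategy. Combined with Theorem~\ref{theo-dom} (which rules out simultaneous winning strategies for Verifier and Falsifier) and the dominance stipulation $\rho_s$ (which makes Dominator's winning strategy trump the others), this yields determinacy in the sense that the game unambiguously selects a single truth value for $\varphi$.

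For the base case, an atomic formula $\varphi$ has exactly one truth value in $\{T, F, N\}$ under $v$, and by rule $\rho_p$ the corresponding player among Verifier, Falsifier, and Dominator wins the immediately terminating play.

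The inductive step is organised by the main connective. For $\neg \psi$, rule $\rho_\neg$ swaps Verifier and Falsifier while preserving Dominator's role, so any inductive winning strategy at $\psi$ transfers to $\neg \psi$ under the appropriate relabelling. For $\chi \wedge \psi$, the inductive hypothesis supplies a winning player for each conjunct; if Dominator wins either, he can steer the game there via $\rho_\wedge$; otherwise each subformula is won by Verifier or Falsifier, in which case Falsifier picks a conjunct he wins if possible, and if instead Verifier wins both, she prevails regardless of Falsifier's independent choice. Disjunction is treated dually, using $\rho_\vee$.

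The delicate point I expect to navigate is the independent, simultaneous nature of play at binary connectives: Falsifier and Dominator (respectively Verifier and Dominator) commit to their picks recorded in $\tau$ without coordination, so each claimed winning strategy must be verified against \emph{every} behaviour of the co-player rather than against a sequential response. Once this is in place, Theorem~\ref{theo-dom} and the stipulation $\rho_s$ together remove any residual ambiguity between the three players, closing the determinacy claim.
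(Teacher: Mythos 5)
Your proof is correct and follows essentially the same route as the paper: the paper states the corollary without an explicit proof, relying on the case analysis inside the proof of Theorem~\ref{theo-dom} (which already shows that in every case either Verifier, or Falsifier and/or Dominator, has a winning strategy), and your induction simply makes that existence argument explicit connective by connective. Your attention to the simultaneous-move subtlety at $\rho_\wedge$/$\rho_\vee$ is a welcome addition but does not change the substance of the argument.
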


Dominant winning strategies and game rules allow us to establish the following correctness theorem for the game semantics for BH3.

\begin{thm}{\label{theo-correct}} 
In a GTS\textsuperscript{BH3} semantic game $\Gamma_{\mathrm{BH3}}(M, \varphi)$,

\begin{itemize}
\item Verifier has a dominant winning strategy if and only if $\varphi$ is true in $M$,
\item Falsifier has a dominant winning strategy if and only if $\varphi$ is false in $M$,
\item Dominator has a dominant winning strategy if and only if $\varphi$ is nonsense in $M$.
\end{itemize}\end{thm}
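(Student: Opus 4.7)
The plan is to prove all three biconditionals simultaneously by structural induction on $\varphi$, because the negation case forces us to track how roles shuffle across V, F, and D together. For each formula $\varphi$, I will show the stronger triple statement: $\varphi$ has truth value $T$ (resp.\ $F$, resp.\ $N$) in $M$ iff Verifier (resp.\ Falsifier, resp.\ Dominator) has a dominant winning strategy in $\Gamma_{\mathrm{BH3}}(M,\varphi)$.

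For the base case, $\varphi$ is a propositional atom. By rule ($\rho_p$) the game terminates immediately, and exactly one player wins according to the valuation of $\varphi$. The (trivially empty) strategy of that player is winning; moreover by Corollary~\ref{cor-det} no other player simultaneously has a winning strategy, and by rule ($\rho_s$) Dominator's strategy dominates when it wins, so in each of the three cases the unique winner's strategy is in fact dominant and determines the truth value of $\varphi$.

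For the inductive step, I would handle negation first. If $\varphi = \neg\psi$, then ($\rho_\neg$) swaps the roles of Verifier and Falsifier while leaving Dominator's role fixed, and the game continues as $\Gamma_{\mathrm{BH3}}(M,\psi)$. Applying the induction hypothesis to $\psi$ and reading off Figure~\ref{truthtable} (namely $\neg T = F$, $\neg F = T$, $\neg N = N$) closes all three cases. For conjunction $\varphi = \chi \wedge \psi$ the argument is a case analysis on the pair of truth values of $\chi$ and $\psi$, read from the $\wedge$-table. The crucial observations are: (i) by $\tau$, Falsifier and Dominator play simultaneously and independently, so their strategies can be analysed separately on the two conjuncts via the IH; (ii) Dominator has a winning strategy iff at least one conjunct has value $N$, which by the truth table is exactly when $\chi \wedge \psi = N$; (iii) Falsifier has a winning strategy iff at least one conjunct is $F$ (or $N$); (iv) Verifier wins iff both conjuncts are $T$. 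Combining these with Theorem~\ref{theo-dom} (which rules out V and F both having winning strategies) and with ($\rho_s$) (Dominator's strategy strictly dominates, so a Dominator win converts any Falsifier win into a non-dominant one), each row of the truth table lines up with exactly one player holding a dominant winning strategy. The disjunction case is entirely symmetric, with the roles of Verifier and Falsifier interchanged and the $\vee$-table replacing the $\wedge$-table.

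The main obstacle, and the step I would take most care with, is the interaction between winning and \emph{dominance} in the concurrent positions. It is tempting to conflate "having a winning strategy" with "having a dominant winning strategy", but under concurrent play at $\wedge$ and $\vee$ two players may both have winning strategies in the classical sense; I need rule ($\rho_s$) to single out Dominator's whenever he wins, and Theorem~\ref{theo-dom} together with the truth table to ensure that in each remaining case the unique survivor among V and F has a strategy that genuinely determines the truth value in the sense of Definition~\ref{dfn-dominant-strategy}. Once that bookkeeping is pinned down, each implication in the three biconditionals reduces to inspection of one row of Figure~\ref{truthtable} via the induction hypothesis.
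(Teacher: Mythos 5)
Your proposal is correct and follows essentially the same route as the paper's own proof: a simultaneous induction on $\varphi$ over all three claims, using the truth tables, Theorem~\ref{theo-dom} to exclude Verifier and Falsifier winning together, and rule ($\rho_s$) to settle dominance at the concurrent $\wedge$/$\vee$ positions (the paper merely organizes the cases by player rather than by connective). The only quibbles are cosmetic: in the base case the exactly-one-winner fact follows directly from ($\rho_p$) rather than from Corollary~\ref{cor-det}, and the parenthetical ``(or $N$)'' in your observation (iii) is not literally right---Falsifier wins only by reaching a false atom---though this does not affect the argument since any such Falsifier win would be dominated anyway.
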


\begin{proof}
We start with the case for Verifier. Let us consider the game state $(\mathrm{Verifier}, \varphi)$. We proceed by induction on the complexity of $\varphi$.

\textsc{Propositional Variables for Verifier:} If $\varphi$ is a propositional letter $p$ true in $M$, then Verifier wins by definition, hence has a dominant winning strategy. By Theorem~\ref{theo-dom}, Falsifier cannot have a winning strategy for $p$. Additionally, by definition, Dominator does not have a winning strategy for $p$. Thus, Verifier's strategy is dominant.

Conversely, if Verifier has a dominant strategy for the game for $p$, by definition, $p$ is true in $M$.

\textsc{Negation for Verifier:} Let $\varphi = \neg \psi$ be true. Then, by the truth table $\psi$ is false. By the game rules, the play continues where Verifier becomes Falsifier. By the induction hypothesis (for falsifier), Falsifier has a dominant winning strategy for $\psi$. Then, Verifier has a winning strategy for $\neg \psi$ by simply playing her game as Falsifier for $\psi$. Her strategy at $\neg \psi$ is dominant as no additional move by the other players is introduced at this stage of the game, according to the game rules. Thus, Verifier has a dominant winning strategy for $\varphi$.

Conversely, assume that Verifier has a dominant winning strategy for $\varphi = \neg \psi$. Then, the game carries on where Falsifier has a dominant strategy for $\psi$. By the induction hypothesis, then $\psi$ is false. By the truth table, $\neg \psi$ is true, rendering $\varphi$ true.

\textsc{Conjunction for Verifier:} Now, let $\varphi$ be a conjunction of the form $\chi \wedge \psi$. Assume that $\varphi$ is true. According to the truth table, the only way to make it true is to have both $\chi$ and $\psi$ true. Then, by the induction hypothesis, Verifier has dominant winning strategies for both $\chi$ and $\psi$. However, for $\varphi$, Falsifier and Dominator make moves. But, whichever move they make (whichever of $\chi$ or $\psi$ they choose), it will be a win for Verifier. Thus, Verifier has a winning strategy. Is her strategy dominant? Notice that neither Falsifier nor Dominator admits a winning strategy at this stage, thus Verifier's strategy is  dominant. Thus, for $\varphi$, Verifier has a dominant winning strategy.

Conversely, assume that Verifier has a dominant winning strategy for $\varphi$ which is of the form $\chi \wedge \psi$. At this stage, it is Falsifier and Dominator who make moves. By Theorem~\ref{theo-dom}, we know that Falsifier cannot have a dominant winning strategy. Thus, he cannot have a winning strategy. By assumption, we know that Dominator does not have a winning strategy either. If he did, it would have dominated Verifier's and we couldn't have assumed that Verifier has a dominant winning strategy for $\varphi$. Now, whatever choice Falsifier and Dominator make, Verifier still has a dominant strategy for the subgames for $\chi$ and $\psi$. Thus, by the induction hypothesis $\chi$ and $\psi$ are both true. By the truth table, their conjunction is true, rendering that $\varphi$ is true.

\textsc{Disjunction for Verifier:}
Let $\varphi$ be a disjunction of the form $\chi \vee \psi$. Assume that $\varphi$ is true. Then, one of the disjuncts is true, and neither is nonsense. As such, by the induction hypothesis, Dominator does not have a winning strategy for either of the disjuncts $\chi$ or $\psi$, and consequently does not have a dominant winning strategy. Verifier then makes a move for either $\chi$ or $\psi$, whichever is true. Then, choosing the true disjunct is her winning strategy at $\varphi$, independent from whatever Dominator chooses. Verifier's strategy is also dominant as Dominator does not have a winning strategy.

Conversely, let Verifier have a dominant winning strategy for $\varphi = \chi \vee \psi$. Then, Dominator does not have a winning strategy at this stage, which, together with the induction hypothesis, suggest that neither of the disjunct is nonsense. Now, Verifier makes a choice following her dominant winning strategy, say $\chi$, without loss of generality. By the induction hypothesis, then, $\chi$ is true. We also established that the other disjunct cannot be nonsense. Thus, by the truth table, $\varphi$ is true. 

\textsc{The Cases for Falsifier:} The cases for Falsifier are similar to those of Verifier's, hence skipped.

\textsc{Propositional Variables for Dominator:} Let $\varphi$ be a propositional variable $p$. If $p$ is nonsense in $M$, then by definition Dominator has a dominant winning strategy.

Conversely, let Dominator have a dominant winning strategy for a propositional variable $p$. Then, by definition $p$ is nonsense.

\textsc{Negation for Dominator:} Let $\varphi = \neg \psi$. If $\varphi$ is nonsense in $M$, then by the truth table $\psi$ is nonsense, too. By the induction hypothesis, then Dominator has a dominant winning strategy for $\psi$. He then continues with the same strategy for $\neg \psi$. His strategy remains dominant as no other player is allowed to make a move at this stage along with him. Even if they did, his strategy would dominate theirs. Thus, Dominator has a dominant winning strategy for $\varphi$.

Conversely, let Dominator have a dominant winning strategy for $\varphi = \neg \psi$. Then, he carries on with his strategy and his role for $\psi$, by the game rules. Then, Dominator has a dominant winning strategy for $\psi$. Then, by the induction hypothesis $\psi$ is nonsense. By the truth table, the negation of nonsense is still nonsense, rendering $\varphi$ nonsense as well.

\textsc{Conjunction for Dominator:} Let $\varphi = \psi \wedge \psi$. If $\psi \wedge \psi$ is non-sense, then by the truth table at least one of the conjuncts has the truth value $N$. Dominator is allowed to make a move at a conjunction and he chooses a conjunct with the truth value $N$. Thus, he has a winning strategy. But, is his strategy a dominant one? At a conjunction, Falsifier can make a move, too. It is also possible that for a nonsense conjunction  $\psi \wedge \psi$, one of the conjunct can be false, according to the truth table. Thus, Falsifier can very well choose that conjunct. In that case, he can have a winning strategy as well but his strategy is not a dominant winning strategy as Dominator's strategies dominate those of Falsifier's (and verifier's) by the game rules. Hence, Dominator has a dominant winning strategy at a nonsense conjunction.

Conversely, again, let Dominator have a dominant winning strategy for $\varphi = \psi \wedge \psi$. Dominator is allowed to make a choice at conjunctions, and he chooses a conjunct following his winning strategy - say he chooses $\psi$, without loss of generality. His dominant winning strategy perseveres after selecting $\psi$. According to the game rules, no other player can dominate his stratagies. By the induction hypothesis, then, $\psi$ is nonsense. According to the truth table, conjunction of nonsense with anything produces nonsense. Thus, $\varphi$ is nonsense.

\textsc{Disjunction for Dominator:} Let $\varphi = \psi \vee \psi$. If $\psi \vee \psi$ is non-sense, then, similarly, by the truth table at least one of the disjuncts has the truth value $N$. Since Dominator is allowed to make a move at disjunctions, he chooses the nonsense disjunct. By the game rules, again, his winning strategy is a dominant one.

Conversely, let Dominator have a dominant winning strategy for $\varphi = \psi \vee \psi$. Similarly, he chooses the nonsense disjunct, which renders the whole disjunction $\varphi$  nonsense.

This completes the proof.	
\end{proof}

It is important to note that in Theorem~\ref{theo-correct}, Verifier or Falsifier can only have a dominant strategy if Dominator does not have one.

\subsection*{More on Strategies}

Dominant strategies can be useful in \emph{solving} games. A familiar method, called the \emph{iterated elimination of strictly dominated strategies} (IESDS, for short), suggests that by eliminating those strategies that are dominated, we can reach a solution. This method directly applies to semantic games for BH3. The reasons can be mentioned as follows.

\begin{obs} In semantic games BH3, the strategies for the dominant player strictly dominates the strategies of Verifier and Falsifier.	
\end{obs}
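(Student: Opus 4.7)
The plan is to unpack game rule $\rho_s$ in light of the BH3 truth tables and to verify the strict dominance claim by structural induction, showing that at every position where Dominator acts concurrently with Verifier or Falsifier, Dominator's winning moves yield a strictly preferred outcome over any winning moves the other players can achieve. To phrase this in game-theoretic terms, I would first fix a payoff convention that makes ``strictly dominates'' meaningful in this setting. Motivated by Definition~\ref{dfn-dominant-strategy}, a win that also forces the truth value of $\varphi$ counts as a strictly higher payoff than a mere win; by Theorem~\ref{theo-correct}, Dominator's wins always force the truth value to $N$, whereas Verifier's and Falsifier's wins need not force the truth value when Dominator is simultaneously in play.

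Then I would proceed by induction on $\varphi$. The atomic case is trivial: the valuation picks a unique winner, so no competing winning strategies arise. For negations, rule $\rho_\neg$ preserves Dominator's role, and the entry $\neg N = N$ in Figure~\ref{truthtable} lifts any dominant strategy from $\psi$ to $\neg\psi$ without introducing any new concurrent move. The substantive cases are the binary connectives. At $\chi \wedge \psi$, the set $\tau$ pairs Falsifier and Dominator in concurrent play, and the infectiousness of $N$ in the $\wedge$-table guarantees that whenever Dominator picks a nonsense conjunct (his winning move), the truth value of $\chi \wedge \psi$ is forced to $N$ regardless of Falsifier's simultaneous pick. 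Thus Dominator's outcome (fixing the truth value to $N$) strictly exceeds Falsifier's best achievable outcome, which at most produces a win on a false conjunct that fails to determine the formula's value. The disjunction case is symmetric via $\rho_\vee$ and the corresponding column of the $\vee$-table.

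The main obstacle is conceptual rather than combinatorial: the classical definition of strict dominance compares numerical payoffs across all opponent strategy profiles, whereas the semantic game has no such payoffs. I would address this, as the footnote following Definition~\ref{dfn-dominant-strategy} anticipates, by stratifying outcomes as ``winning and fixing the truth value of $\varphi$'' above ``winning without fixing it'' above ``losing'', and then checking that Dominator's winning strategy always lands at the top of this ranking uniformly over the other players' choices. Once the stratification is pinned down, the strict dominance claim reduces to reading off the infectious behavior of $N$ from Figure~\ref{truthtable} together with the concurrency clauses in $\tau$.
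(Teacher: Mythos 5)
The paper offers no proof of this Observation at all: it is stated as an immediate consequence of the game rule $(\rho_s)$ in Definition~\ref{dfn-BH3game}, which \emph{stipulates} that ``Dominator's strategy strictly dominates the Verifier's and Falsifier's.'' The Observation is essentially a restatement of that rule, recorded so that the applicability of IESDS can be read off directly. Your proposal instead treats the dominance as something to be \emph{derived} from the truth tables, via a payoff stratification and a structural induction showing that Dominator's wins always force the value $N$ while the classical players' wins need not force anything. That is a genuinely different route, and it buys something the paper leaves implicit: a demonstration that the stipulated dominance is the \emph{right} design choice, i.e.\ that it coheres with the infectiousness of $N$ in Figure~\ref{truthtable} under a natural ranking of outcomes (win-and-fix above win-without-fixing above lose). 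However, you should flag one circularity: your argument leans on Theorem~\ref{theo-correct}, whose own proof repeatedly invokes $(\rho_s)$ (e.g.\ in the conjunction case for Dominator, where Falsifier's competing win is dismissed precisely because ``Dominator's strategies dominate those of Falsifier's \ldots by the game rules''). So as an independent justification of the dominance your induction is not self-standing --- it re-imports the stipulation it aims to establish. Read as a consistency check on the game design rather than as a derivation, it is sound and arguably more informative than the paper's bare assertion; read as a proof of the Observation, the one-line appeal to $(\rho_s)$ is all that is needed or intended.
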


\begin{obs}
In a GTS\textsuperscript{BH3} semantic game, Dominator makes a move at each connective.
\end{obs}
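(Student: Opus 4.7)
The plan is to proceed by a straightforward case analysis on the main connective of a non-atomic subformula, appealing directly to Definition~\ref{dfn-BH3game}. If the subformula has the form $\chi \wedge \psi$, clause $\rho_\wedge$ explicitly stipulates that Falsifier and Dominator choose between the conjuncts independently and simultaneously; correspondingly, the $\tau$-clause records $\{(\mathrm{Falsifier}, \chi \wedge \psi), (\mathrm{Dominator}, \chi \wedge \psi)\}$ as a concurrent set, and the $\sigma$-clause includes $(\mathrm{Dominator}, \chi \wedge \psi)$ in the position set. Symmetrically, if the main connective is $\vee$, clause $\rho_\vee$ has Dominator making a simultaneous choice alongside Verifier. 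Hence at every binary connective Dominator is among the movers.

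For negation, rule $\rho_\neg$ prescribes no branching choice: Verifier and Falsifier swap roles, Dominator retains his role, and play continues as $\Gamma_{\mathrm{BH3}}(M, \psi)$. The point is that Dominator is never dropped from the game when a negation is unwrapped, so by induction on formula complexity the play eventually reaches either an atom (where the game terminates in the base case of $\rho_p$, and Dominator is still present to collect the win if $\psi$ is nonsense) or another binary connective (where the previous paragraph applies). In either case, Dominator is accounted for at every connective encountered along the run.

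The only thing to watch is that one is not tempted to read the observation as asserting that Dominator makes a \emph{branching} choice at negation; since negation has no branching in any of our rules, the observation is to be interpreted uniformly as ``Dominator is a participant at every connective position in $\sigma$,'' which is precisely what the inductive clauses for $\sigma$, $\tau$, and $\rho$ were designed to ensure. There is no real obstacle beyond unpacking the definition.
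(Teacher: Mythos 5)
The paper states this as an unproved observation, treating it as immediate from Definition~\ref{dfn-BH3game}, and your case analysis on the main connective is exactly the justification the paper implicitly relies on: $\rho_\wedge$ and $\rho_\vee$ explicitly give Dominator a simultaneous choice, and $\rho_\neg$ keeps him in play at every negation. Your caveat about how to read ``makes a move'' at negation (participation rather than a branching choice) is the right way to resolve the only ambiguity in the statement, so the proposal is correct and consistent with the paper.
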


\begin{rem} \cite{osb} An action of a player in a finite strategic game is never a best-response if and only if it is strictly dominated.
\end{rem}

Intuitively, IESDS applies to semantic games for BH3 because of the \emph{infectiousness} of the truth value nonsense. If the given formula contains an atom with the truth value nonsense, all the other paths in the game tree can be eliminated as those strategies are strictly dominated even if they are winning (non-classically speaking). The following example illustrates our point.  

\begin{ex}
Consider the formula $(p \vee q) \vee (r \wedge q)$ from Example~\ref{example-one} where $p$ is true, $q$ is nonsense and $r$ is false. This formula has the truth value $N$ in BH3.
\begin{quote}
	\Tree [.{$(p \vee q) \vee (r \wedge q)$} [.{$p \vee q$} [.$p$ \small{\textsc{true}} ] [.$q$ \small{\textsc{nonsense}} ] ] !\qsetw{3cm} [.{$r \wedge q$} [.$r$ \small{\textsc{false}} ] [.$q$ \small{\textsc{nonsense}} ] ] ]
\end{quote}
In this case, Verifier's strategy of $L-L$ (which means play ``left" consecutively twice) is dominated by Dominator's strategy $L-R$. The strategy $L-L$ is eliminated, yielding $L-R$ as the dominant strategy. Similarly, Falsifier's strategy if $R$ is played by Verifier at the beginning of the game is eliminated by Dominator's strategy of $R-R$. \hfill $\blacktriangle$
\end{ex}

In game theory, the strategy obtained after IESDS is often called a \emph{rationalizable} strategy \cite{osb}. Similarly, we call the dominant strategy in a semantic game obtained after IESDS as the \emph{truth-maker strategy} since such strategies determine the truth value of the formula in question. Truth-maker strategies exist in many semantic games.

\begin{prop}
Every GTS\textsuperscript{BH3} semantic game has a truth-maker strategy.
\end{prop}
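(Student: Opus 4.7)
The plan is to derive the proposition as a direct consequence of the correctness theorem (Theorem~\ref{theo-correct}) together with the game-rule stipulation ($\rho_s$) that Dominator's strategies strictly dominate those of Verifier and Falsifier. In BH3, every formula $\varphi$ is assigned exactly one truth value from $\{T, N, F\}$ by the valuation $v$ of the model $M$. By the three cases of Theorem~\ref{theo-correct}, this means that in the game $\Gamma_{\mathrm{BH3}}(M,\varphi)$ exactly one of the three players admits a dominant winning strategy: Verifier if $v(\varphi) = T$, Falsifier if $v(\varphi) = F$, and Dominator if $v(\varphi) = N$. So a dominant winning strategy exists in every GTS\textsuperscript{BH3} game; it remains only to verify that this strategy is what we have called a \emph{truth-maker strategy}, i.e.\ a strategy that (i) survives IESDS and (ii) determines the truth value of $\varphi$.

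First I would handle the nonsense case, which is the most characteristic. If $v(\varphi) = N$, then by Theorem~\ref{theo-correct} Dominator has a dominant winning strategy $\Sigma$. By rule ($\rho_s$) and the two preceding observations, $\Sigma$ strictly dominates any strategy of Verifier or Falsifier at every position where they might move; therefore, when IESDS is applied, all strategies of Verifier and Falsifier that are incompatible with $\Sigma$'s choices are eliminated, leaving $\Sigma$ as the surviving (rationalizable) strategy. Since $\Sigma$ is dominant in the sense of Definition~\ref{dfn-dominant-strategy}, it determines $v(\varphi) = N$ once played, so it is a truth-maker strategy.

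Next I would treat the true and false cases symmetrically. Suppose $v(\varphi) = T$, so by Theorem~\ref{theo-correct} Verifier has a dominant winning strategy $\Sigma$. Crucially, the remark after Theorem~\ref{theo-correct} records that Verifier can only have a dominant strategy if Dominator does not, so in this case Dominator has no winning strategy at all; hence his strategies are not best responses and get eliminated by IESDS (using the cited fact from \cite{osb} that a non-best-response action in a finite strategic game is strictly dominated). What remains after IESDS is Verifier's dominant winning strategy $\Sigma$, which by definition determines $v(\varphi) = T$. The case $v(\varphi) = F$ is analogous with Falsifier in place of Verifier.

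The only delicate point — and where I would be most careful — is the invocation of the remark from \cite{osb}: that remark is stated for finite strategic games, so I would briefly note that the semantic game $\Gamma_{\mathrm{BH3}}(M,\varphi)$ on a fixed finite formula $\varphi$ reduces to a finite strategic game (the game tree is finite, as $\varphi$ has finitely many subformulas and each rule strictly decreases complexity), so the remark applies and justifies eliminating the non-winning player's strategies via IESDS. Once this is in place, the three cases combine to show that every GTS\textsuperscript{BH3} semantic game admits a truth-maker strategy, completing the proof.
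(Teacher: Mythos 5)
Your argument is correct, but it takes a different route from the paper. The paper's own proof is a one-line appeal to Theorem~\ref{theo-dom} and Corollary~\ref{cor-det}: since Verifier and Falsifier never have winning strategies simultaneously and the game is determined, some player has a winning strategy, and the dominance stipulation ($\rho_s$) then singles out the one that survives elimination and fixes the truth value. You instead route the argument through the correctness theorem (Theorem~\ref{theo-correct}): every BH3 formula has exactly one truth value, so exactly one player has a dominant winning strategy, and you then check case by case that this strategy survives IESDS. Your approach buys an explicit identification of \emph{which} player holds the truth-maker strategy in each case and makes the ``determines the truth value'' half of the definition immediate, at the cost of invoking the section's main theorem where the paper gets by with the lighter determinacy facts; the paper's approach stays purely game-theoretic and avoids leaning on the semantic correspondence, but leaves the IESDS survival step implicit. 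One small point of caution in your version: in the $v(\varphi)=T$ and $v(\varphi)=F$ cases you pass from ``Dominator has no winning strategy'' to ``his actions are never best responses, hence strictly dominated and eliminated.'' That inference is not literally licensed by the cited remark (a losing player's actions can still be best responses among losing options); it is harmless here because the existence and dominance of the surviving strategy already follow from Theorem~\ref{theo-correct} and ($\rho_s$), but you should not present the IESDS elimination of the losing players as a consequence of the \cite{osb} remark alone.
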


\begin{proof}
Follows from Theorem~\ref{theo-dom} and Corollary~\ref{cor-det}.
\end{proof}

The following theorem summarizes our observations.

\begin{thm}{\label{theo-elimination}}
In a GTS\textsuperscript{BH3} semantic game $\Gamma_{\mathrm{BH3}}(M, \varphi)$, if $\varphi$ contains a literal with a truth value nonsense, then Dominator has a dominant winning strategy and consequently $\varphi$ is nonsense.
\end{thm}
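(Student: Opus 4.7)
The plan is to construct Dominator's dominant winning strategy \emph{directly} by structural induction on $\varphi$, tracking the occurrence of the nonsense literal down the parse tree, and only at the end invoke the already-proved Theorem~\ref{theo-correct} to obtain the ``consequently'' clause. Throughout, the dominance part of ``dominant winning strategy'' will be free from $\rho_s$, so the real work is to show Dominator can always steer the play to the nonsense literal and win at it.

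For the base case, suppose $\varphi$ itself is the nonsense literal. If $\varphi = p$ with $v(p) = N$, then by $\rho_p$ Dominator wins immediately at the atomic position, and by $\rho_s$ this winning strategy strictly dominates any strategy of Verifier or Falsifier. If $\varphi = \neg p$ with $v(p) = N$, rule $\rho_\neg$ keeps Dominator's role and transitions to $\Gamma_{\mathrm{BH3}}(M, p)$, reducing to the previous subcase.

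For the inductive step I consider the main connective of $\varphi$. If $\varphi = \neg\psi$, then the nonsense literal lies inside $\psi$, so by the induction hypothesis Dominator has a dominant winning strategy $\Sigma$ in $\Gamma_{\mathrm{BH3}}(M,\psi)$; rule $\rho_\neg$ preserves Dominator's role and hands control to this subgame, so playing $\Sigma$ is a dominant winning strategy for $\varphi$. If $\varphi = \chi\wedge\psi$, the nonsense literal occurs in at least one conjunct; without loss of generality it occurs in $\chi$. By $\rho_\wedge$, Dominator chooses independently (and simultaneously with Falsifier) one of $\chi$ or $\psi$; his strategy is to pick $\chi$ and then follow the dominant winning strategy for $\chi$ given by the induction hypothesis. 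Because $\rho_s$ makes Dominator's strategy strictly dominate Falsifier's at this stage, whatever Falsifier chooses concurrently is irrelevant to the overall dominance claim. The case $\varphi = \chi\vee\psi$ is entirely symmetric, with Verifier playing the concurrent role and $\rho_\vee$ in place of $\rho_\wedge$.

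Having constructed a dominant winning strategy for Dominator, I can then apply the left-to-right direction of the third bullet of Theorem~\ref{theo-correct} to conclude that $\varphi$ is nonsense in $M$, completing the ``consequently'' clause. The main subtlety to watch is the concurrency in the conjunction and disjunction cases: one must verify that ``dominant'' is preserved even when a non-Dominator player moves simultaneously. This is handled cleanly by rule $\rho_s$, which asserts strict dominance of Dominator's strategies against both Verifier and Falsifier independently of the turn structure, so the induction goes through without needing to reason about the other players' concurrent choices beyond invoking $\rho_s$.
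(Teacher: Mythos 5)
Your proof is correct and is essentially the argument the paper gives: a structural induction in which Dominator steers toward the subformula containing the nonsense literal, the induction hypothesis supplies a dominant winning strategy for that subformula, rule $(\rho_s)$ disposes of the concurrent moves of Verifier and Falsifier, and Theorem~\ref{theo-correct} yields the ``consequently'' clause. The only difference is presentational: the paper phrases the same induction in the vocabulary of iterated elimination of strictly dominated strategies (discarding Dominator's choice of the other conjunct and Falsifier's strategies as strictly dominated), whereas you construct the dominant strategy directly; the content of the two arguments is the same.
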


\begin{proof}
It is easy to prove this theorem by an induction on $\varphi$ using logical techniques, reflecting the truth table for BH3 and Theorem~\ref{theo-correct}. However, in what follows, we will prove it using the method of IESDS.

The cases for the propositional variables and the negation are immediate, hence skipped.

Assume that $\varphi$ is a conjunction of the form $\psi \wedge \chi$. Without loss of generality let us suppose that $\psi$ contains a literal with a truth value nonsense. By the induction hypothesis, Dominator has a dominant winning strategy for $\psi$. Thus, Dominator's strategy for $\chi$ is strictly dominated. Then, we can eliminate the strategy for $\chi$. In this case, Dominator chooses $\psi$ when it is his turn at $\varphi$. Similarly, by the game rule $(\rho_s)$, any possible (winning) strategy of Falsifier is strictly dominated, thus can be eliminated. Hence, Dominator's strategy remains dominant at $\varphi$ and determines the outcome of the game where $\varphi$ contains a literal with a truth value of nonsense. Hence, by Theorem~\ref{theo-correct}, $\varphi$ is nonsense.

The case for disjunction is similar.
\end{proof}

An interesting question is how such solution methods relate to Nash equilibria in semantic games. There can be thought of at least two approaches to this question. First, one can  \emph{engineer} logics that necessarily have an equilibria which can be obtained by IESDS. In other words, one can try to construct a truth table and a logic where the players' strategies can be worked out using IESDS. What kind of logics can be produced that way? What is their algebraic structure? Second, one can try to \emph{develop} a game with a rationalizable strategy for a logic which is not infectious. These are big questions. They fall outside the scope of this paper and are left for future work.

Instead of eliminating strictly dominated strategies, one can introduce additional players with strictly dominant strategies into semantic games for BH3. What kind of a logic does this methodology generate? We answer this question in the next section.

\section{Extending Bochvar--Halld\'{e}n Games}

Using the idea of dominant strategies, we can \emph{engineer} some semantic games without first considering their possible semantics. This is an interesting methodology which develops logics (and truth tables) which are \emph{solely} generated by semantic games.

In what follows, we extend the BH3 games to four players, where we call the forth player \emph{Dictator}. Dominator's strategy dominates Falsifier's and Verifier's, whereas Dictator's strategy dominates them all. For simplicity, we call the truth value that is forced by Dictator as \emph{super} and denote it by $S$. Thus, Dictator's role is to force the semantic game to an end with an atom with the truth value super.

We propose the following rules for the extended game $\Gamma(M, \varphi)$ with respect to players' roles.

\begin{itemize}
\item[($\rho'_p$)] If $\varphi$ is atomic, the game terminates, and Verifier wins if $\varphi$ is true, Falsifier wins if $\varphi$ is false, Dominator wins if $\varphi$ is nonsense, and Dictator wins if $\varphi$ is super,
\item[($\rho'_\neg$)] if $\varphi = \neg \psi$, Falsifier and Verifier switch roles, Dominator and Dictator keep their roles, and the game continues as $\Gamma(M, \psi)$,
\item[($\rho'_\wedge$)] if $\varphi = \chi \wedge \psi$, Falsifier, Dominator and Dictator choose between $\chi$ and $\psi$ simultaneously,
\item[($\rho'_\vee$)] if $\varphi = \chi \vee \psi$, Verifier, Dominator and Dictator choose between $\chi$ and $\psi$ simultaneously.
\item[($\rho'_s$)] Dominator's strategy strictly dominates Verifier's and Falsifier's, and Dictator's strategy strictly dominates them all. 
\end{itemize}

These rules give raise to a logic. We call this system BH4. We can then construct a truth table for BH4 that corresponds to the given game rules in Figure~\ref{BH4truth}.\footnote{It turns out that this logic has been suggested in \cite{szm,ciu}.}

\begin{figure}[h!]
\centering
\begin{tabular}{l|l}
& $\neg$ \\
\hline
$T$ & $F$ \\
$F$ & $T$ \\
$N$ & $N$ \\
$S$ & $S$
\end{tabular} 
\qquad
\begin{tabular}{c|c|c|c|c}
$\wedge$ & $T$ & $N$ & $S$ & $F$ \\
\hline
$T$ & $T$ & $N$ & $S$ & $F$ \\
$N$ & $N$ & $N$ & $S$ & $N$ \\
$S$ & $S$ & $S$ & $S$ & $S$ \\
$F$ & $F$ & $N$ & $S$ & $F$
\end{tabular}
\qquad
\begin{tabular}{c|c|c|c|c}
$\vee$ & $T$ & $N$ & $S$ & $F$ \\
\hline
$T$ & $T$ & $N$ & $S$ & $T$ \\
$N$ & $N$ & $N$ & $S$ & $N$ \\
$S$ & $S$ & $S$ & $S$ & $S$ \\
$F$ & $T$ & $N$ & $S$ & $F$
\end{tabular}
\caption{\emph{The truth tables underlying} BH4.}
{\label{BH4truth}}
\end{figure}

Let us start by making an observation regarding the strategy configuration in the game.

\begin{thm}
In a GTS\textsuperscript{BH4} semantic game $\Gamma_{\mathrm{BH4}}(M, \varphi)$, Verifier and Falsifier can never have winning strategies at the same time. Thus, GTS\textsuperscript{BH4} games are determined.
\end{thm}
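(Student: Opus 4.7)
The plan is to adapt the proof of Theorem~\ref{theo-dom} to the four-player setting, by induction on the complexity of $\varphi$. For the base case, if $\varphi$ is an atom then by rule $(\rho'_p)$ the single truth value in $\{T, F, N, S\}$ assigned to $\varphi$ makes exactly one of the four players win, so in particular Verifier and Falsifier are never both winners, and at least one player does win. For $\varphi = \neg \psi$, rule $(\rho'_\neg)$ swaps only Verifier and Falsifier while Dominator and Dictator keep their roles, so the inductive hypothesis on $\psi$ transfers directly to $\neg \psi$ under the relabelling.

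The substantive cases are the binary connectives. Consider $\varphi = \chi \wedge \psi$: by rule $(\rho'_\wedge)$, the three choosers are Falsifier, Dominator, and Dictator, acting independently and simultaneously, while Verifier is a non-chooser. Hence Verifier has a winning strategy at $\varphi$ only if she has winning strategies at both $\chi$ and $\psi$ (she cannot steer which conjunct the others pursue), whereas Falsifier has a winning strategy at $\varphi$ only if he can himself point to a conjunct at which he has a winning strategy. If both were to hold simultaneously, then Falsifier would win some conjunct at which Verifier also wins, contradicting the inductive hypothesis for that conjunct. The case $\varphi = \chi \vee \psi$ is dual, with Verifier now a chooser alongside Dominator and Dictator, and Falsifier the non-chooser needing winning strategies for both disjuncts; the same contradiction argument applies.

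For the determinacy claim, the same induction shows that at every position at least one of the four players has a winning strategy: atoms have exactly one winner, negation only relabels Verifier and Falsifier, and at a binary connective a case analysis driven by the BH4 truth table in Figure~\ref{BH4truth} shows that if any chooser has a winning strategy at some sub-formula then that chooser inherits one at the compound, while if no chooser does then the non-chooser wins both sub-formulas and thereby wins the compound. The main subtlety I expect is in being precise about what ``winning strategy'' means in a simultaneous-move setting with three choosers: the key observation is that Falsifier's winning condition at a conjunction, like Verifier's at a disjunction, depends only on his own choice and not on those of Dominator or Dictator, so introducing Dictator as a further independent chooser does not create any new joint winning scenarios for Verifier and Falsifier beyond what was already ruled out in the BH3 argument of Theorem~\ref{theo-dom}.
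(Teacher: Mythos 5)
Your proof is correct and takes exactly the route the paper intends: the paper's own proof is a one-line deferral to Theorem~\ref{theo-dom}, and your argument is precisely the adaptation of that induction to the four-player setting, with the right key observation that adding Dictator as a further independent chooser creates no new scenarios in which Verifier and Falsifier could both win. Your explicit argument for determinacy (at least one of the four players always has a winning strategy) fills in a step the paper leaves implicit.
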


\begin{proof}
Similar to that of Theorem~\ref{theo-dom}, hence left to the reader.	
\end{proof}

Similar to what we have shown for BH3, we present the correctness theorem for BH4 as follows.

\begin{thm}{\label{theo-bh4}}
In a GTS\textsuperscript{BH4} semantic game $\Gamma_{\mathrm{BH4}}(M, \varphi)$,

\begin{itemize}
\item Verifier has a dominant winning strategy if and only if $\varphi$ is true in $M$,
\item Falsifier has a dominant winning strategy if and only if $\varphi$ is false in $M$,
\item Dominator has a dominant winning strategy if and only if $\varphi$ is nonsense in $M$,
\item Dictator has a dominant winning strategy if and only if $\varphi$ is super in $M$,
\end{itemize}
\end{thm}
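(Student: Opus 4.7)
The plan is to extend the induction of Theorem~\ref{theo-correct} to four players and the two infectious truth values $N$ and $S$. For the atomic base case, rule $(\rho'_p)$ assigns the win to exactly one of the four players according to $v(p)$, and the strict dominance ordering $(\rho'_s)$ prevents any other player from having a winning strategy (hence a dominant one). For the negation $\varphi = \neg \psi$, the BH4 truth table fixes both $N$ and $S$ under negation and swaps $T$ and $F$; this is mirrored precisely by $(\rho'_\neg)$, where Verifier and Falsifier exchange roles while Dominator and Dictator retain theirs. Since no additional moves are introduced at a negation, dominance is preserved and the induction hypothesis applied to $\psi$ yields the result for each of the four clauses.

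For Verifier at a conjunction $\chi \wedge \psi$: if $\varphi$ is true then the truth table forces both conjuncts to be true, so by the induction hypothesis neither Dominator nor Dictator has a winning strategy on either conjunct, hence neither has one at $\varphi$, and Verifier wins regardless of Falsifier's choice. Her strategy is dominant because no higher-ranked player has any winning option to outrank it. The converse uses the dominance hierarchy in reverse: a dominant winning strategy for Verifier at $\varphi$ precludes winning strategies for Falsifier, Dominator, and Dictator, so by induction no conjunct has value $F$, $N$, or $S$, forcing both conjuncts to be $T$. The cases of Verifier at disjunction and Falsifier at the two binary connectives are symmetric, modified only in that Dictator's lack of a winning strategy must now be explicitly invoked alongside Dominator's, as in the BH3 argument.

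The genuinely new case is Dictator. The BH4 tables make $S$ strictly infectious: if any subformula of a conjunction or disjunction is $S$, the whole formula is $S$. For the forward direction, if $\varphi = \chi \wedge \psi$ (or $\chi \vee \psi$) is super, then at least one immediate subformula is super, and by $(\rho'_\wedge)$ (resp. $(\rho'_\vee)$) Dictator is entitled to move; he selects that subformula, where by the induction hypothesis he has a dominant winning strategy. By $(\rho'_s)$ this strategy strictly dominates every other player's, so it remains dominant at $\varphi$. For the converse, a dominant winning strategy for Dictator at $\varphi$ projects onto a dominant winning strategy for some subformula, which by induction is super; infectiousness then propagates $S$ up to $\varphi$.

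The subtlest case, and the main obstacle, is Dominator at a binary connective, because Dominator is now an \emph{intermediate} player: he dominates Verifier and Falsifier but is himself dominated by Dictator. The forward direction for $\varphi = \chi \wedge \psi$ nonsense requires both that some conjunct is $N$ (so Dominator has a winning move by induction) \emph{and} that no conjunct is $S$ (so Dictator has no winning strategy, which is what lets Dominator's be maximal). Both facts must be read off the BH4 conjunction table, which yields $N$ precisely when at least one conjunct is $N$ and none is $S$. The converse direction must likewise exploit the assumption that Dominator's strategy is dominant to rule out a Dictator winning strategy, and hence — by the induction hypothesis — any super subformula; combined with the fact that Dominator's winning choice lands on a nonsense subformula, the truth table then gives $\varphi$ the value $N$. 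Once this bookkeeping is in place the disjunction subcase is analogous, and the theorem follows.
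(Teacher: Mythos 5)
Your proof is correct and follows essentially the same route as the paper: an induction on $\varphi$ mirroring Theorem~\ref{theo-correct}, using the strict dominance ordering of $(\rho'_s)$ to pass between ``has a dominant winning strategy'' and the corresponding truth value at each connective. The only difference is in which cases are worked out in detail --- the paper presents Falsifier and Dictator at conjunction, whereas you additionally (and correctly) spell out the genuinely new subtlety of Dominator as the intermediate player, who must be checked against Dictator above him as well as against the classical players below.
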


\begin{proof}
The proof is by induction on $\varphi$ and very similar to the proof of Theorem~\ref{theo-correct}. We will only consider some interesting cases.

\textsc{Conjunction for Falsifier:} Let $\varphi = \psi \wedge \chi$. Assume that Falsifier has a dominant winning strategy for $\varphi$. He chooses the false conjunct, say $\psi$. By the induction hypothesis, $\psi$ is false. But, it is not sufficient to establish the truth value of the formula. However, since Falsifier's strategy is the dominant one by assumption, that means that Dominator and Dictator do not have winning strategies. By the induction hypothesis for the cases for Dominator and Dictator in this very theorem, then neither of the conjuncts is nonsense nor super. Thus, according to the truth table of BH4 given in Figure~\ref{BH4truth}, $\varphi$ is false.

Conversely, assume $\varphi$ is false. Falsifier, Dominator and Dictator make moves. According to the truth table, one of the conjunct has to be false, say $\psi$, and Falsifier chooses it. By the induction hypothesis, Falsifier has a dominant winning strategy for $\psi$. At $\varphi$, this constitutes Falsifier's winning strategy. Is his strategy then dominant? According to the truth table, the only possibilities for the conjuncts are true and false. Thus, they cannot be nonsense or super, otherwise $\varphi$ would not be false. Therefore, Dominator and Dictator have no winning strategies at $\varphi$. Thus, Falsifier's winning strategy is dominant.

\textsc{Conjunction for Dictator:} Let $\varphi = \psi \wedge \chi$. Assume that Dictator has a dominant winning strategy for $\varphi$. He follows his dominant strategy and chooses one of the conjuncts, say $\psi$. The induction hypothesis says that $\psi$ has the truth value super. Thus, by the the truth table of BH4 given in Figure~\ref{BH4truth}, $\varphi$ is super.

Conversely, if $\varphi$ is super, then some of the conjuncts has to be super. Dictator is allowed to make a move at a conjunction, and he chooses the super conjunct. The induction hypothesis says that Dictator has a dominant winning strategy for that conjunct. According to the game rules, Dictator's strategy dominates all. So, his dominant winning strategy at the conjunct remains to be a dominant strategy at $\varphi$.
\end{proof}

Following our methodology, different combinations of the above game rules can be constructed, yielding different logics with more or different infectious truth values. Furthermore, similar to what is suggested in \cite{ciu}, additional truth values, thus players, beyond the fourth can be introduced in a way that the dominant winning strategies form a linear order: Dominator dominates the classical players; Dictator dominates Dominator; a fifth player, say King, dominates Dictator and the rest etc. This procedure is rather straight-forward for countably many players. The semantic games seem to get more interesting once ${>}\omega$-many players are considered with a linear or branching order of dominant strategies.

\section{From BH3-Games to LP-Games}

In an earlier work, a game theoretical semantics for Graham Priest's Logic of Paradox (LP, for short) was given \cite{bas25}. In that work, the semantic games employed concurrent plays. At certain nodes, players engaged in concurrent plays where none were assumed to have dominant winning strategies. Consequently, those game rules produced significantly weaker correctness theorems for the game semantics for LP. It is then a natural question whether a game semantics for LP can be given using dominant strategies and if they entail a stronger correctness theorem. This would complement our overall aim of focusing on strategic dominance in game semantics for non-classical logics, and, at the same time, provide yet another game semantics for LP. This is our goal in this section.

The logic of paradox introduces a third truth value which was intended to represent paradoxical statements. The third truth value $P$ is called \emph{paradoxical} and requires its own player, which was called \emph{Astrolabe} \cite{bas25}. Astrolabe forces the game to an end with the truth value $P$, so he is the \emph{paradoxifier}. In semantic games for LP, conjunctions are for Falsifier and Paradoxifier and disjunctions are for Verifier and Paradoxifier. At negations, similar to BH3 games, Verifier and Falsifier switch roles, and Paradoxifier keeps his role. We reproduce the truth table for LP in the following.

\begin{figure}[h!]
\centering
\begin{tabular}{l|l}
& $\neg$ \\
\hline
$T$ & $F$ \\
$F$ & $T$ \\
$P$ & $P$ \\
\end{tabular} 
\qquad
\begin{tabular}{c|c|c|c}
$\wedge$ & $T$ & $P$ & $F$ \\
\hline
$T$ & $T$ & $P$ & $F$ \\
$P$ & $P$ & $P$ & $F$ \\
$F$ & $F$ & $F$ & $F$
\end{tabular}
\qquad
\begin{tabular}{c|c|c|c}
$\vee$ & $T$ & $P$ & $F$ \\
\hline
$T$ & $T$ & $T$ & $T$ \\
$P$ & $T$ & $P$ & $P$ \\
$F$ & $T$ & $P$ & $F$
\end{tabular}
\caption{\emph{The truth tables underlying} LP.}
{\label{LPtruth}}
\end{figure}

Notice that the very same game would work for the Strong Kleene System with minimal alterations as Strong Kleene and LP differ only on their designated truth values.

In \cite{bas25}, a game semantics for LP was suggested by allocating Astrolabe as the parallel player. He made moves, just as Dictator in BH3 games, along with Verifier or Falsifier. However, as the strategies and their strength were not properly considered in the aforementioned, the correctness theorems were not as strong.

\begin{thm}[\cite{bas25}]
In a GTS\textsuperscript{LP} semantic game $\Gamma_{\mathrm{LP}}(M, \varphi)$,

\begin{itemize}
\item Verifier has a winning strategy if $\varphi$ is true in $M$,
\item Falsifier has a winning strategy if $\varphi$ is false in $M$,
\item The pardoxifier has a winning strategy if $\varphi$ is paradoxical in $M$.
\end{itemize}\end{thm}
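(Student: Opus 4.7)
The plan is to prove the three implications uniformly by induction on the complexity of $\varphi$, mirroring the structure of the proof of Theorem~\ref{theo-correct} but dropping everything concerning dominance, since the claim here is only the easy direction of correctness and only for plain winning strategies. The base case is immediate: if $\varphi$ is an atom $p$, then by the termination rule analogous to $(\rho_p)$ the player whose target value matches the valuation of $p$ wins trivially, so has a winning strategy of length zero.

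For the inductive step, the negation case $\varphi = \neg\psi$ uses the LP table $\neg T = F$, $\neg F = T$, $\neg P = P$ together with the game rule that Verifier and Falsifier swap roles while Paradoxifier keeps his. Thus if $\varphi$ is true then $\psi$ is false, and by the induction hypothesis Falsifier has a winning strategy for $\psi$; after the swap this same strategy is played by Verifier in the game on $\psi$, yielding a winning strategy for Verifier in the game on $\varphi$. The cases for Falsifier and Paradoxifier are symmetric.

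The conjunction case $\varphi = \chi \wedge \psi$ is handled by reading off the LP table in Figure~\ref{LPtruth}. If $\varphi$ is true then both conjuncts are true, so by the induction hypothesis Verifier has winning strategies for both $\chi$ and $\psi$; then regardless of which subformula Falsifier or Paradoxifier selects in the concurrent move, Verifier's combined strategy remains winning, since she wins on whichever branch she is carried to. If $\varphi$ is false, inspection of the table shows that at least one conjunct must be false (including the cases $P \wedge F$ and $F \wedge P$), and Falsifier, who does move at conjunctions, selects that conjunct and invokes the induction hypothesis. If $\varphi$ is paradoxical, the table shows that at least one conjunct must itself be paradoxical (the only $P$-entries are $T\wedge P$, $P\wedge T$, $P\wedge P$); Paradoxifier, who also moves at conjunctions, picks that conjunct and applies the induction hypothesis, keeping his role throughout. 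The disjunction case is entirely dual, reading off the $P$-entries $P\vee P$, $P\vee F$, $F\vee P$ and noting that $\varphi$ false forces both disjuncts to be false while $\varphi$ true forces at least one to be true.

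The main obstacle is a conceptual one rather than a computational one: because Paradoxifier plays concurrently with Verifier or Falsifier at every binary connective, one must verify that the different players' winning strategies can coexist in the same run. This is why the statement is weaker than in the BH3 case and only requires winning (not dominant winning) strategies: each player's winning strategy only needs to secure a favorable atom along the positions controlled by that player's role, and nothing in the rules prevents two such strategies from being simultaneously successful on different branches of the concurrent play. A secondary delicacy, which contrasts with BH3, is that the value $P$ in LP is not uniformly infectious (e.g.\ $F \wedge P = F$), so the case analysis must appeal to the exact LP truth table rather than to a blanket infectiousness principle as in Theorem~\ref{theo-elimination}.
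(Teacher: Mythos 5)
The paper gives no proof of this statement: it is imported verbatim from \cite{bas25} as background, so there is nothing in this document to compare your argument against. That said, your induction is correct and is the natural argument --- the base case, the role swap at negation, and the table-driven case analysis at the binary connectives (at least one false conjunct when a conjunction is false, at least one paradoxical conjunct/disjunct when the value is $P$, both conjuncts true when a conjunction is true) all check out against Figure~\ref{LPtruth} and the rules $(\rho^{\mathrm{LP}}_\wedge)$, $(\rho^{\mathrm{LP}}_\vee)$. Your closing observation is also the right diagnosis of why only the one-directional, non-dominant statement holds here: concurrent play lets several players' winning strategies succeed on different branches of the same run, which is exactly the defect the paper later repairs in Theorem~\ref{thm-lp2} by adding the dominance rule $(\rho^{\mathrm{LP}}_s)$.
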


\begin{thm}[\cite{bas25}]{\label{thm-lp}}
In a GTS\textsuperscript{LP} semantic game $\Gamma_{\mathrm{LP}}(M, \varphi)$,

\begin{itemize}
\item If Verifier has a winning strategy, then  $\varphi$ is true in $M$,
\item If Falsifier has a winning strategy, then $\varphi$ is false in $M$,
\item If Paradoxifier has a winning strategy, but not the other players, then $\varphi$ is paradoxical in $M$.
\end{itemize}\end{thm}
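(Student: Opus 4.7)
The plan is to derive the theorem from its companion statement (the immediately preceding theorem, which provides the direction from truth value to winning strategy) by combining it with two auxiliary lemmas about the LP game, each proved by joint induction on $\varphi$: (i) Verifier and Falsifier never both possess a winning strategy in $\Gamma_{\mathrm{LP}}(M, \varphi)$, an LP-analogue of Theorem~\ref{theo-dom}; and (ii) whenever $\varphi$ is paradoxical in $M$, neither Verifier nor Falsifier has a winning strategy at $\varphi$. With these in hand, each bullet of the target theorem falls out by contrapositive reasoning.

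Observation (i) is verified just as for BH3. At atomic $\varphi$ it is immediate from the atom rule; for $\varphi = \neg \psi$ the role swap reduces to the induction hypothesis; for $\varphi = \chi \wedge \psi$, a Verifier winning strategy requires wins on both conjuncts whereas a Falsifier winning strategy requires a win on at least one, and the induction hypothesis applied to that conjunct yields the contradiction; the disjunction case is dual. Observation (ii) exploits the shape of the LP truth table: for $\varphi = \chi \wedge \psi$ to be paradoxical, at least one conjunct must be $P$ and neither may be $F$, so by the companion theorem no conjunct affords Falsifier a winning strategy and Falsifier has none at $\varphi$; by the induction hypothesis Verifier cannot win at any conjunct valued $P$, so Verifier has none at $\varphi$ either. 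Disjunction and negation are analogous, and the atomic case is immediate from the atom rule.

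Deducing the three bullets is then mechanical. If Verifier has a winning strategy at $\varphi$, then by (i) Falsifier does not, so by the contrapositive of the Falsifier companion clause $\varphi$ is not false; by (ii) $\varphi$ is not paradoxical; hence $\varphi$ is true. The Falsifier clause is symmetric. For Paradoxifier, the added assumption that no other player wins at $\varphi$, together with the contrapositives of the Verifier and Falsifier companion clauses, forces $\varphi$ to be neither true nor false, hence paradoxical --- observation (ii) is not even needed in this final bullet.

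The main obstacle is establishing observation (ii). Unlike BH3, where Dominator's strategy was dominant by fiat, the LP game of \cite{bas25} treats Paradoxifier merely as a parallel player, so winning strategies for the classical players can coexist with Paradoxifier wins at non-paradoxical formulas, as witnessed by the $T \vee P = T$ and $F \wedge P = F$ entries of Figure~\ref{LPtruth}. Tracking exactly which LP truth-table rows are compatible with which classical winning strategies, and verifying inductively that all paradoxical rows genuinely preclude both classical players, is the delicate part of the argument. Once (ii) is in hand, the remainder reduces to bookkeeping around the companion theorem.
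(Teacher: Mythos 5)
The paper does not actually prove this statement: Theorem~\ref{thm-lp} is imported verbatim from \cite{bas25} as background, and the only proofs in the present paper of results of this shape (Theorems~\ref{theo-correct}, \ref{theo-bh4}, \ref{thm-lp2}) proceed by a direct case-by-case induction on $\varphi$ that establishes both directions of each biconditional simultaneously. Your route is genuinely different and, in outline, correct: you take the companion soundness theorem (truth value $\Rightarrow$ winning strategy) as given, add two exclusivity lemmas --- (i) Verifier and Falsifier never both win, (ii) at a paradoxical formula neither classical player wins --- and then obtain each bullet by contraposition and exhaustion over the three truth values. This is more economical than the paper's style of argument, and it isolates exactly where the work lies (your lemma (ii)); it also makes transparent why the third bullet needs the extra hypothesis that the classical players do not win, which is precisely the point the paper later replaces by strategy dominance in Theorem~\ref{thm-lp2}.

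One step in your proof of (ii) is stated too strongly. For a paradoxical conjunction $\chi\wedge\psi$ the conjuncts take values in $\{T,P\}$, and you claim that ``by the companion theorem no conjunct affords Falsifier a winning strategy.'' The companion theorem is a one-directional implication from truth values to winning strategies, so from ``$\chi$ is not false'' it does not by itself exclude a Falsifier win on $\chi$. What you need is: for a $T$-valued conjunct, the companion theorem gives Verifier a winning strategy and lemma (i) then excludes Falsifier; for a $P$-valued conjunct, the induction hypothesis of (ii) itself (its Falsifier half) excludes Falsifier. Since you announce (i) and (ii) as a joint induction and invoke the induction hypothesis in the very next clause, this is a presentational slip rather than a structural gap, but as written the appeal to the companion theorem alone does not justify the conclusion. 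The analogous care is needed in the disjunction case for excluding a Verifier win on an $F$-valued disjunct (companion theorem plus (i)) versus a $P$-valued one (induction hypothesis of (ii)).
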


These two theorems --especially the third bullet point in Theorem~\ref{thm-lp}-- seem to allude to some hierarchy between the player's strategies. In the sequel, we remedy this issue by imposing an order of domination over strategies.

We now give the new game rules for GTS\textsuperscript{LP} games as follows for $\Gamma_{\mathrm{LP}}(M, \varphi)$ with respect to players' roles.

\begin{itemize}
\item[($\rho^\mathrm{LP}_p$)]  If $\varphi$ is atomic, the game terminates, and Verifier wins if $\varphi$ is true, Falsifier wins if $\varphi$ is false, and Paradoxifier wins if $\varphi$ is paradoxical,
\item[($\rho^\mathrm{LP}_\neg$)]  if $\varphi = \neg \psi$, Verifier and Falsifier switch roles, Paradoxifier keeps his role, and the game continues as $\Gamma_{\mathrm{LP}}(M, \psi)$,
\item[($\rho^\mathrm{LP}_\wedge$)]  if $\varphi = \chi \wedge \psi$, Falsifier and Paradoxifier choose between $\chi$ and $\psi$ simultaneously,
\item[($\rho^\mathrm{LP}_\vee$)]  if $\varphi = \chi \vee \psi$, Verifier and Paradoxifier choose between $\chi$ and $\psi$ simultaneously,
\item[($\rho^\mathrm{LP}_s$)] Paradoxifier's strategy is strictly dominated by Verifier's and Falsifier's.
\end{itemize}

Similar to the case of BH3, Verifier and Falsifier will never have conflicting strategies. If either Verifier or Falsifier may have some conflicting strategy, it can only be with Paradoxifier where the pardoxifier's strategy is \emph{strictly dominated} by them.

It is important to notice that the difference in the truth tables for BH3 and LP (given in Figures~\ref{truthtable} and \ref{LPtruth}), can be accounted by the difference between the strategic dominance of players, as specified by game rules ($\rho_s$) and ($\rho^\mathrm{LP}_s$). Therefore, the only (game theoretical) difference between BH3 games and the LP games are whether the third player is the strictly dominant or the strictly dominated player. The following theorems summarizes our findings.

\begin{thm}
In a GTS\textsuperscript{LP} semantic game $\Gamma_{\mathrm{LP}}(M, \varphi)$, Verifier and Falsifier can never have winning strategies at the same time.
\end{thm}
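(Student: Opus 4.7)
The plan is to prove this by induction on the structure of $\varphi$, in direct parallel to the proof of Theorem~\ref{theo-dom} for BH3. Crucially, although the LP game differs from BH3 in the strategic ordering (here Paradoxifier is strictly \emph{dominated} rather than dominant), the statement concerns only the existence of winning strategies for Verifier and Falsifier, not their dominance. So Paradoxifier's presence at conjunctions and disjunctions will only matter insofar as he is an additional chooser whose moves do not interfere with Verifier's or Falsifier's own winning conditions.

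For the base case, if $\varphi$ is a propositional atom $p$, then by the rule $(\rho^{\mathrm{LP}}_p)$ exactly one of the three players wins, depending on whether $v(p)$ is $T$, $F$, or $P$. In particular Verifier and Falsifier cannot both have winning strategies. For the negation case $\varphi = \neg \psi$, the rule $(\rho^{\mathrm{LP}}_\neg)$ has Verifier and Falsifier swap roles while Paradoxifier keeps his. Hence a winning strategy for Verifier at $\varphi$ corresponds exactly to a winning strategy for Falsifier at $\psi$, and vice versa. The induction hypothesis applied to $\psi$ then immediately gives the claim for $\varphi$.

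The principal step is the conjunction case, $\varphi = \chi \wedge \psi$. Here by $(\rho^{\mathrm{LP}}_\wedge)$ the players who make a choice are Falsifier and Paradoxifier; Verifier is not a chooser. Thus Falsifier has a winning strategy at $\varphi$ iff he has a winning strategy at $\chi$ or at $\psi$ (he picks the winning conjunct), while Verifier has a winning strategy at $\varphi$ iff she has a winning strategy at \emph{both} $\chi$ and $\psi$ (she must be prepared for whichever conjunct Falsifier selects). Suppose for contradiction that both Verifier and Falsifier have winning strategies at $\varphi$. Then Falsifier wins at one conjunct, say $\chi$, while Verifier wins at both conjuncts, in particular at $\chi$ — contradicting the induction hypothesis applied to $\chi$. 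The disjunction case is entirely symmetric, with the roles of Verifier and Falsifier interchanged via $(\rho^{\mathrm{LP}}_\vee)$.

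The only subtlety to check is that the simultaneous move by Paradoxifier at binary connectives does not alter the analysis of Verifier's or Falsifier's winning positions. This is immediate from the setup: the rules specify each player's local winning condition at atoms independently, so whether Verifier has a strategy that forces the play into a true atom depends only on her own choices (if she is a chooser) and those of Falsifier at the binary connectives, independently of what Paradoxifier chooses in parallel. The main (but still mild) obstacle is simply being careful to phrase ``Verifier has a winning strategy at $\chi \wedge \psi$'' correctly given that Falsifier chooses the conjunct, and dually for disjunction; once this is done, the induction goes through uniformly.
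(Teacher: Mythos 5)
Your proof is correct and follows essentially the same route as the paper: the paper proves this by deferring to the inductive argument of Theorem~\ref{theo-dom}, whose key step is exactly your observation that at a conjunction Falsifier needs a winning strategy for only one conjunct while Verifier needs one for both, which the induction hypothesis rules out simultaneously (dually for disjunction). Your version merely spells out for the LP rules what the paper leaves as ``very similar, hence skipped,'' including the correct remark that Paradoxifier's parallel move is irrelevant to the Verifier/Falsifier winning conditions.
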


\begin{proof}
The proof is very similar to that of Theorem~\ref{theo-dom}, hence skipped.	
\end{proof}

\begin{thm}{\label{thm-lp2}}
In a GTS\textsuperscript{LP} semantic game $\Gamma_{\mathrm{LP}}(M, \varphi)$,

\begin{itemize}
\item Verifier has a dominant winning strategy if and only if $\varphi$ is true in $M$,
\item Falsifier has a dominant winning strategy if and only if $\varphi$ is false in $M$,
\item Paradoxifier has a dominant winning strategy if and only if $\varphi$ is paradoxical in $M$.
\end{itemize}
\end{thm}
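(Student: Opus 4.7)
The plan is to prove the theorem by induction on the complexity of $\varphi$, closely mirroring the proof of Theorem~\ref{theo-correct} while accounting for the one crucial structural change: in LP, the third player (Paradoxifier) is strictly \emph{dominated} by the classical players rather than strictly dominating them. The base case is immediate from the atomic rule ($\rho^{\mathrm{LP}}_p$), and the negation case is handled by ($\rho^{\mathrm{LP}}_\neg$) together with the LP truth table entries $\neg T = F$, $\neg F = T$, $\neg P = P$, since no new moves by opposing players are introduced at a negation and any inductive dominant winning strategy is preserved.

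The binary connectives require more care. For $\varphi = \chi \wedge \psi$, both Falsifier and Paradoxifier may move while Verifier is passive. Consider the case for Verifier: if $\varphi$ is true, then the LP truth table forces both $\chi$ and $\psi$ to be true, so by the induction hypothesis Verifier has dominant winning strategies in both subgames, while Falsifier and Paradoxifier lack winning strategies in either conjunct; this renders Verifier's combined strategy dominant regardless of the opponents' concurrent choices. Conversely, the immediately preceding theorem on LP games guarantees that Verifier and Falsifier cannot simultaneously admit winning strategies, so a dominant winning strategy for Verifier forces no conjunct to be false; a paradoxical conjunct is also ruled out, since such a conjunct would give Paradoxifier a winning move that would not be dominated by Verifier's (whose winning strategy on that conjunct must exist to survive opponents' choices, but would render $\varphi$'s truth value underdetermined, contradicting dominance). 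The Falsifier case is dual.

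The subtlest cases are those for Paradoxifier. For conjunction, if $\varphi$ is paradoxical the LP truth table guarantees that no conjunct is false while at least one is paradoxical; then Paradoxifier's winning move is to pick the paradoxical conjunct, and his strategy is dominant since Falsifier has no winning strategy at $\varphi$ (no false conjunct to select) and Verifier has none either (at least one conjunct is not true, so by IH Verifier cannot have dominant winning strategies in both subgames). Conversely, a dominant winning strategy for Paradoxifier means, via ($\rho^{\mathrm{LP}}_s$) and the preceding theorem, that neither Falsifier nor Verifier admits a winning strategy at $\varphi$; Paradoxifier's choice therefore lands at a paradoxical conjunct by IH, and jointly these constraints yield paradoxicality of $\varphi$ through the truth table. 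The disjunction case is handled by a symmetric argument with the roles of Verifier and Falsifier interchanged.

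The main obstacle is managing the interaction between the three players' strategies at the binary connectives: unlike the BH3 case where Dominator's dominance propagates in a uniformly contagious way mirroring the infectiousness of nonsense, in LP the non-infectious behaviour of $P$ (for instance $P \wedge F = F$ and $P \vee T = T$) means that Paradoxifier's winning strategy can legitimately be overridden by Falsifier or Verifier, and the proof must carefully exploit ($\rho^{\mathrm{LP}}_s$) together with the preceding theorem on non-coexistence of Verifier/Falsifier winning strategies to ensure that a claimed dominant strategy for one player genuinely precludes a winning strategy for either of the others.
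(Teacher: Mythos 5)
Your proposal is correct and follows essentially the same route as the paper's own proof: induction on $\varphi$ mirroring Theorem~\ref{theo-correct}, using rule ($\rho^{\mathrm{LP}}_s$) together with the non-coexistence of Verifier/Falsifier winning strategies to show that a dominant winning strategy for one player excludes winning strategies for the others at each binary connective. The only superficial difference is that you detail the conjunction case for Verifier where the paper details the disjunction case, but these are symmetric and the underlying argument is the same.
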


\begin{proof} 
The proof is by induction on $\varphi$ and very similar to the proof of Theorem~\ref{theo-correct}. We will only consider some interesting cases.

\textsc{Disjunction for Verifier:} Let $\varphi = \psi \vee \chi$. Assume that Verifier has a dominant winning strategy. Then, following his strategy he makes a choice, say $\psi$. His dominant strategy remains to be his dominant winning strategy at $\psi$. At the same time, Paradoxifier makes a choice. But whatever choice he makes, even if he has a winning strategy, by the game rules, it is strictly dominated by Verifier's strategy. Then, by the induction hypothesis, $\psi$ is true. According to the truth table for LP given in Figure~\ref{LPtruth}, $\varphi$ is true as well. 

Conversely, let $\varphi$ be true. Both Verifier and Paradoxifier make moves. According to the truth table, at least one of the disjuncts must be true, and Verifier chooses that. By the induction hypothesis, Verifier has a dominant winning strategy at that disjunct. Then, at $\varphi$, Verifier's dominant strategy is to choose the disjunct with the dominant winning strategy. Whatever choice Paradoxifier makes, his strategy is dominated. Thus, he cannot have a dominant strategy.

\textsc{Conjunction for Paradoxifier:} Let $\varphi = \psi \wedge \chi$. Assume that Paradoxifier has a dominant winning strategy. Since his strategy is always dominated by Verifier's and Falsifier's, this means that neither Verifier nor Falsifier has a winning strategy for $\varphi$. Now, both Paradoxifier and Falsifier are allowed to make a move. Falsifier does not have a winning strategy, which means that, by some other cases of the very theorem, neither of the conjuncts is false. If they were, then Falsifier may have a winning strategy, and by the game rules his strategy would be the dominant. Thus, Paradoxifier follows his winning strategy and makes a choice, say $\psi$, without loss of generality. According to the truth table, as we established that $\chi$ can only be paradoxical or true, $\varphi$ turns out to be paradoxical.

Conversely, let $\varphi = \psi \wedge \chi$ be paradoxical. According to the truth table for LP given in Figure~\ref{LPtruth}, conjuncts can be either true or paradoxical. Paradoxifier chooses the paradoxical conjunct. By the induction hypothesis, it is his winning strategy for that conjunct. And at $\varphi$, he constructs his winning strategy by choosing the said conjunct with a winning strategy. But, is Paradoxifier's winning strategy dominant? At conjunctions, Verifier is not allowed to make a move - if he was, he may have chosen the true conjunct if there was one. This would constitute his winning strategy, which, according to the game rules, would dominate Paradoxifier's. Therefore, at a paradoxical conjunct, Verifier is not allowed to make a move and Falsifier does not have a winning strategy. Hence, Paradoxifier's winning strategy is dominant.
\end{proof}

Compared to Theorem~\ref{thm-lp}, Theorem~\ref{thm-lp2} makes it precise what game theoretically corresponds to the condition that ``only paradoxifier has a winning strategy". We replaced this condition by dominated strategies. As such, the above result complements what has been presented in \cite{bas25}. It furthermore suggests that certain combinations of strategy dominance amongst the players may result in various interesting logical ideas. For example, in our work strategies strictly dominate some others. What is then the logic of those semantic games where strategies \emph{weakly} dominate? In the context of semantic games, weak strategy domination is an interesting issue.\footnote{We are thankful to the anonymous referee for pointing this out.} It may mean that the correctness theorems work only on one direction, failing to provide a bidirectional stronger result. We refer the interested reader to \cite{bas25}. 

Similar to the questions we raised for BH3/BH4, can we extend LP \emph{game theoretically} to a four-valued paraconsistent system? Such questions point out to various future work possibilities.

\section{Conclusion}

In this paper, we showed how non-classical game theory helps us understand the nuances of non-classical logics as well as certain game theoretical concepts.

To the best of our knowledge, this is the first game theoretical semantics suggested for Bochvar--Halld\'{e}n logics and infectious logics in general, relating infectiousness to strategy dominance. This proves that studying various combinations of dominant and dominated strategies is a fruitful direction to develop various new logics and study their model theory. A linear order of strategy dominance relates directly to certain solution concepts in game theory. Then the next question is to study a branching order of strategy dominance and how they relate to truth-coalitions in game semantics. 

Conversely, it is possible to inquire the opposite direction and develop logical methods for some other solution concepts and equilibria computation in game theory. Computing game theoretical equilibria is an important research direction in computer science. Incorporating logical elements to this methodology will certainly have immediate impact in aforementioned fields.

\paragraph{Acknowledgements} The feedback and suggestions of anonymous referees, which helped improve the paper, are much appreciated.

\bibliographystyle{eptcs} 
\bibliography{papers.bib}

\begin{thebibliography}{10}
\providecommand{\bibitemdeclare}[2]{}
\providecommand{\surnamestart}{}
\providecommand{\surnameend}{}
\providecommand{\urlprefix}{Available at }
\providecommand{\url}[1]{\texttt{#1}}
\providecommand{\href}[2]{\texttt{#2}}
\providecommand{\urlalt}[2]{\href{#1}{#2}}
\providecommand{\doi}[1]{doi:\urlalt{http://dx.doi.org/#1}{#1}}
\providecommand{\bibinfo}[2]{#2}

\bibitemdeclare{incollection}{abr2}
\bibitem{abr2}
\bibinfo{author}{Samson \surnamestart Abramsky\surnameend} \&
  \bibinfo{author}{Guy \surnamestart McCusker\surnameend}
  (\bibinfo{year}{1999}): \emph{\bibinfo{title}{Game Semantics}}.
\newblock In \bibinfo{editor}{Ulrich \surnamestart Berger\surnameend} \&
  \bibinfo{editor}{Helmut \surnamestart Schwichtenberg\surnameend}, editors:
  {\sl \bibinfo{booktitle}{Computational Logic}}, {\sl \bibinfo{series}{NATO
  ASI Series}} \bibinfo{volume}{165}, \bibinfo{publisher}{Springer}, pp.
  \bibinfo{pages}{1--55}, \doi{10.1007/978-3-642-58622-4_1}.

\bibitemdeclare{article}{aqv}
\bibitem{aqv}
\bibinfo{author}{Lennart \surnamestart {\AA}qvist\surnameend}
  (\bibinfo{year}{1962}): \emph{\bibinfo{title}{Reflections on the Logic of
  Nonsense}}.
\newblock {\sl \bibinfo{journal}{Theoria}}
  \bibinfo{volume}{28}(\bibinfo{number}{2}), pp. \bibinfo{pages}{138--57},
  \doi{10.1111/j.1755-2567.1962.tb00316.x}.

\bibitemdeclare{article}{bas25}
\bibitem{bas25}
\bibinfo{author}{Can \surnamestart Ba{\c{s}}kent\surnameend}
  (\bibinfo{year}{2016}): \emph{\bibinfo{title}{Game Theoretical Semantics for
  Some Non-Classical Logics}}.
\newblock {\sl \bibinfo{journal}{Journal of Applied Non-Classical Logics}}
  \bibinfo{volume}{26}(\bibinfo{number}{3}), pp. \bibinfo{pages}{208--39},
  \doi{10.1080/11663081.2016.1225488}.

\bibitemdeclare{article}{bas30}
\bibitem{bas30}
\bibinfo{author}{Can \surnamestart Ba{\c{s}}kent\surnameend} \&
  \bibinfo{author}{Pedro \surnamestart {Henrique Carrasqueira}\surnameend}
  (\bibinfo{year}{2018}): \emph{\bibinfo{title}{A Game Theoretical Semantics
  for a Logic of Formal Inconsistency}}.
\newblock {\sl \bibinfo{journal}{Logic Journal of the IGPL}}
  \bibinfo{volume}{to appear}, \doi{10.1093/jigpal/jzy068}.

\bibitemdeclare{article}{boc}
\bibitem{boc}
\bibinfo{author}{Dimitri \surnamestart Bochvar\surnameend}
  (\bibinfo{year}{1937}): \emph{\bibinfo{title}{On a three-valued logical
  calculus and its application to the analysis of contradictions}}.
\newblock {\sl \bibinfo{journal}{Matematicheskii Sbornik}}
  \bibinfo{volume}{4}(\bibinfo{number}{46}), pp. \bibinfo{pages}{287--308}.

\bibitemdeclare{article}{boc0}
\bibitem{boc0}
\bibinfo{author}{Dimitri \surnamestart Bochvar\surnameend}
  (\bibinfo{year}{1981}): \emph{\bibinfo{title}{On a Three-valued Logical
  Calculus and Its Application to the Analysis of the Paradoxes of the
  Classical Extended Functional Calculus}}.
\newblock {\sl \bibinfo{journal}{History and Philosophy of Logic}}
  \bibinfo{volume}{2}(\bibinfo{number}{1-2}), pp. \bibinfo{pages}{87--112},
  \doi{10.1080/01445348108837023}.

\bibitemdeclare{article}{ciu}
\bibitem{ciu}
\bibinfo{author}{Roberto \surnamestart Ciuni\surnameend},
  \bibinfo{author}{Thomas~Macaulay \surnamestart Ferguson\surnameend} \&
  \bibinfo{author}{Damian \surnamestart Szmuc\surnameend}
  (\bibinfo{year}{2019}): \emph{\bibinfo{title}{Logics Based on Linear Orders
  of Contaminating Values}}.
\newblock {\sl \bibinfo{journal}{Journal of Logic and Computation}}
  \bibinfo{volume}{29}(\bibinfo{number}{5}), pp. \bibinfo{pages}{631--663},
  \doi{10.1093/logcom/exz009}.

\bibitemdeclare{article}{fer0}
\bibitem{fer0}
\bibinfo{author}{Thomas~M. \surnamestart Ferguson\surnameend}
  (\bibinfo{year}{2015}): \emph{\bibinfo{title}{Logics of Nonsense and Parry
  Systems}}.
\newblock {\sl \bibinfo{journal}{Journal of Philosophical Logic}}
  \bibinfo{volume}{44}(\bibinfo{number}{1}), pp. \bibinfo{pages}{65--80},
  \doi{10.1007/s10992-014-9321-y}.

\bibitemdeclare{incollection}{ferm0}
\bibitem{ferm0}
\bibinfo{author}{Christian~G. \surnamestart Ferm{\"{u}}ller\surnameend}
  (\bibinfo{year}{2016}): \emph{\bibinfo{title}{Semantic Games for Fuzzy
  Logics}}.
\newblock In \bibinfo{editor}{Petr \surnamestart Cintula\surnameend},
  \bibinfo{editor}{Christian~G. \surnamestart Ferm{\"{u}}ller\surnameend} \&
  \bibinfo{editor}{Carles \surnamestart Noguera\surnameend}, editors: {\sl
  \bibinfo{booktitle}{Handbook of Mathematical Fuzzy Logic}},
  \bibinfo{volume}{3}, \bibinfo{publisher}{College Publications}, pp.
  \bibinfo{pages}{969--1029}.

\bibitemdeclare{incollection}{ferm}
\bibitem{ferm}
\bibinfo{author}{Christian~G. \surnamestart Ferm{\"{u}}ller\surnameend} \&
  \bibinfo{author}{Ondrej \surnamestart Majer\surnameend}
  (\bibinfo{year}{2018}): \emph{\bibinfo{title}{On Semantic Games for
  {\L}ukasiewicz Logic}}.
\newblock In \bibinfo{editor}{Hans \surnamestart van Ditmarsch\surnameend} \&
  \bibinfo{editor}{Gabriel \surnamestart Sandu\surnameend}, editors: {\sl
  \bibinfo{booktitle}{Jaakko Hintikka on Knowledge and Game Theoretical
  Semantics}}, \bibinfo{publisher}{Springer},
  \doi{10.1007/978-3-319-62864-6_10}.

\bibitemdeclare{article}{halk}
\bibitem{halk}
\bibinfo{author}{Katarzyna \surnamestart Ha{\l}kowska\surnameend}
  (\bibinfo{year}{1989}): \emph{\bibinfo{title}{A Note on Matrices for Systems
  of Nonsense-Logics}}.
\newblock {\sl \bibinfo{journal}{Studia Logica}}
  \bibinfo{volume}{48}(\bibinfo{number}{4}), pp. \bibinfo{pages}{461--464},
  \doi{10.1007/BF00370200}.

\bibitemdeclare{book}{hald}
\bibitem{hald}
\bibinfo{author}{S{\"{o}}ren \surnamestart Halld{\'{e}}n\surnameend}
  (\bibinfo{year}{1949}): \emph{\bibinfo{title}{The Logic of Nonsense}}.
\newblock \bibinfo{publisher}{Uppsala Universitets {\AA}rsskrift}.

\bibitemdeclare{incollection}{hin8}
\bibitem{hin8}
\bibinfo{author}{Jaakko \surnamestart Hintikka\surnameend} \&
  \bibinfo{author}{Gabriel \surnamestart Sandu\surnameend}
  (\bibinfo{year}{1997}): \emph{\bibinfo{title}{Game-theoretical semantics}}.
\newblock In \bibinfo{editor}{Johan \surnamestart van Benthem\surnameend} \&
  \bibinfo{editor}{Alice \surnamestart ter Meulen\surnameend}, editors: {\sl
  \bibinfo{booktitle}{Handbook of Logic and Language}},
  \bibinfo{publisher}{Elsevier}, pp. \bibinfo{pages}{361--410},
  \doi{10.1016/B978-044481714-3/50009-6}.

\bibitemdeclare{book}{lak}
\bibitem{lak}
\bibinfo{author}{Imre \surnamestart Lakatos\surnameend} (\bibinfo{year}{2005}):
  \emph{\bibinfo{title}{Proofs and Refutations}}.
\newblock \bibinfo{publisher}{Cambridge University Press}.

\bibitemdeclare{book}{lor1}
\bibitem{lor1}
\bibinfo{author}{Paul \surnamestart Lorenzen\surnameend} \&
  \bibinfo{author}{Kuno \surnamestart Lorenz\surnameend}
  (\bibinfo{year}{1978}): \emph{\bibinfo{title}{Dialogische Logik}}.
\newblock \bibinfo{publisher}{WBG}.

\bibitemdeclare{inproceedings}{omo1}
\bibitem{omo1}
\bibinfo{author}{Hitoshi \surnamestart Omori\surnameend}
  (\bibinfo{year}{2016}): \emph{\bibinfo{title}{Halld{\'{e}}n's Logic of
  Nonsense and its Expansions in view of Logics of Formal Inconsistency}}.
\newblock In \bibinfo{editor}{A.~M. \surnamestart Tjoa\surnameend},
  \bibinfo{editor}{Z.~\surnamestart Vale\surnameend} \& \bibinfo{editor}{R.~R.
  \surnamestart Wagner\surnameend}, editors: {\sl \bibinfo{booktitle}{27th
  International Workshop on Database and Expert Systems Applications}},
  \bibinfo{organization}{DEXA 2016}, pp. \bibinfo{pages}{129--133},
  \doi{10.1109/DEXA.2016.039}.

\bibitemdeclare{inproceedings}{omo0}
\bibitem{omo0}
\bibinfo{author}{Hitoshi \surnamestart Omori\surnameend} \&
  \bibinfo{author}{Damian \surnamestart Szmuc\surnameend}
  (\bibinfo{year}{2017}): \emph{\bibinfo{title}{Conjunction and Disjunction in
  Infectious Logics}}.
\newblock In \bibinfo{editor}{Alexandru \surnamestart Baltag\surnameend},
  \bibinfo{editor}{Jeremy \surnamestart Seligman\surnameend} \&
  \bibinfo{editor}{Tomoyuki \surnamestart Yamada\surnameend}, editors: {\sl
  \bibinfo{booktitle}{Logic, Rationality, and Interaction}},
  \bibinfo{series}{LNCS 10455}, \bibinfo{organization}{LORI 2017},
  \bibinfo{publisher}{Springer}, pp. \bibinfo{pages}{268--283},
  \doi{10.1007/978-3-662-55665-8_19}.

\bibitemdeclare{book}{osb}
\bibitem{osb}
\bibinfo{author}{Martin~J. \surnamestart Osborne\surnameend} \&
  \bibinfo{author}{Ariel \surnamestart Rubinstein\surnameend}
  (\bibinfo{year}{1994}): \emph{\bibinfo{title}{A Course in Game Theory}}.
\newblock \bibinfo{publisher}{MIT Press}.

\bibitemdeclare{article}{piet}
\bibitem{piet}
\bibinfo{author}{Ahti \surnamestart Pietarinen\surnameend} \&
  \bibinfo{author}{Gabriel \surnamestart Sandu\surnameend}
  (\bibinfo{year}{2000}): \emph{\bibinfo{title}{Games in Philosophical Logic}}.
\newblock {\sl \bibinfo{journal}{Nordic Journal of Philosophical Logic}}
  \bibinfo{volume}{4}(\bibinfo{number}{2}), pp. \bibinfo{pages}{143--173}.

\bibitemdeclare{article}{rah}
\bibitem{rah}
\bibinfo{author}{Shahid \surnamestart Rahman\surnameend} \&
  \bibinfo{author}{Walter~A. \surnamestart Carnielli\surnameend}
  (\bibinfo{year}{2000}): \emph{\bibinfo{title}{The Dialogical Approach to
  Paraconsistency}}.
\newblock {\sl \bibinfo{journal}{Synthese}}
  \bibinfo{volume}{125}(\bibinfo{number}{1-2}), pp. \bibinfo{pages}{201--231},
  \doi{10.1023/A:1005294523930}.

\bibitemdeclare{incollection}{rah0}
\bibitem{rah0}
\bibinfo{author}{Shahid \surnamestart Rahman\surnameend} \&
  \bibinfo{author}{Tero \surnamestart Tulenheimo\surnameend}
  (\bibinfo{year}{2009}): \emph{\bibinfo{title}{From Games to Dialogues and
  Back}}.
\newblock In \bibinfo{editor}{Ondrej \surnamestart Maher\surnameend},
  \bibinfo{editor}{Ahti \surnamestart Pietarinen\surnameend} \&
  \bibinfo{editor}{Tero \surnamestart Tulenheimo\surnameend}, editors: {\sl
  \bibinfo{booktitle}{Games: Unifying Logic, Language and Philosophy}},
  \bibinfo{publisher}{Springer}, pp. \bibinfo{pages}{153--208},
  \doi{10.1007/978-1-4020-9374-6_8}.

\bibitemdeclare{article}{seg1}
\bibitem{seg1}
\bibinfo{author}{Krister \surnamestart Segerberg\surnameend}
  (\bibinfo{year}{1965}): \emph{\bibinfo{title}{A Contribution to Nonsense
  Logics}}.
\newblock {\sl \bibinfo{journal}{Theoria}}
  \bibinfo{volume}{31}(\bibinfo{number}{3}), pp. \bibinfo{pages}{199--217},
  \doi{10.1111/j.1755-2567.1965.tb00579.x}.

\bibitemdeclare{article}{szm}
\bibitem{szm}
\bibinfo{author}{Damien~Enrique \surnamestart Szmuc\surnameend}
  (\bibinfo{year}{2016}): \emph{\bibinfo{title}{Defining LFIs and LFUs in
  Extensions of Infectious Logics}}.
\newblock {\sl \bibinfo{journal}{Journal of Applied Non-Classical Logics}}
  \bibinfo{volume}{26}(\bibinfo{number}{4}), pp. \bibinfo{pages}{286--314},
  \doi{10.1080/11663081.2017.1290488}.

\end{thebibliography}

\end{document}